\def\EXTENDED{}
\documentclass[runningheads]{llncs}

\usepackage{amsmath,amssymb,mathtools}
\usepackage{graphicx}
\usepackage[hidelinks]{hyperref}
\usepackage{booktabs}

\newcommand{\E}{\mathbb{E}}

\newcommand{\Prob}{\mathbb{P}}
\newcommand{\R}{\mathbb{R}}
\newcommand{\dt}{\Delta t}
\newcommand{\1}{\mathbf{1}}
\newcommand{\bound}{\mathrm{bound}}

\newif\ifextended
\ifdefined\EXTENDED\extendedtrue\else\extendedfalse\fi

\newcommand{\deferproof}[1]{%
  \par\smallskip\noindent\textit{Proof.}\ \ifextended
  See Appendix~#1.\else
  See Appendix~#1 of the extended version~\cite{bundi2026ext}.\fi\par\smallskip}

\title{Optimal Block Time for AMM\\ Liquidity Providers under Jump-Diffusion Prices%
\ifextended\thanks{Extended version of a contribution accepted at MARBLE 2026 and to be
published by Springer Nature in \emph{Mathematical Research for Blockchain Economy} (Lecture
Notes in Operations Research); it adds the proofs omitted from the published version. This is
not the Version of Record.}\fi}
\titlerunning{Optimal Block Time under Jump-Diffusion}

\author{Nils Bundi\orcidID{0000-0003-3576-3289}}
\authorrunning{N. Bundi}
\institute{Zurich University of Applied Sciences, School of Engineering,\\
Winterthur, Switzerland\\
\email{bund@zhaw.ch}}

\begin{document}
\maketitle

\begin{abstract}
Loss-versus-Rebalancing (LVR) is the dominant adverse-selection cost borne by liquidity providers on automated market makers. Under geometric Brownian motion, arbitrage profit scales with the probability of a profitable block, which vanishes as the block time $\dt \to 0$ --- the standing argument for ever-shorter blocks. Modeling the reference price instead as a jump-diffusion, I show that the constant-product LVR rate splits into a diffusion channel carrying the known multiplier $F(\gamma/(\sigma\sqrt{\dt}))$ and a jump channel $\lambda V \cdot G(\gamma;m,\delta^2)$ carrying no $\dt$, the two interacting only through an explicitly bounded remainder. The block schedule therefore governs only one channel. For \emph{symmetric} jump laws the jump channel is moreover an exact lower bound, $\ell(\dt) \ge \lambda V G > 0$, so the rate does not vanish as $\dt \to 0$, and descends \emph{slowly} as $\sqrt{\dt}$. At Ethereum's calibrated $12$-second slot the rate is $471$~bp/yr against a floor of $125$, so only three quarters of LP loss is schedule-addressable. At Solana's $400$~ms slot the jump channel already dominates. Netting the rate against per-block consensus cost, the LP-side optimal block time is invariant in pool size and in every jump parameter $(\lambda,m,\delta)$: jumps shift the level of LP loss but not the planner's marginal tradeoff. Volatility, the fee tier, and consensus cost set the optimum, near $8$~s. However, LVR is only one input to block-time welfare, so this bounds the LP-side contribution rather than settling the design question.

\keywords{Automated market makers \and Loss-versus-Rebalancing \and Jump-diffusion \and Block time \and Decentralized finance \and Blockchain protocol design}
\end{abstract}

\section{Introduction}\label{sec:intro}

Automated market makers (AMMs) intermediate the bulk of on-chain spot trading, yet the passive liquidity providers who fund them systematically cede value to better-informed arbitrageurs whenever the external price moves. How fast a blockchain produces blocks---the cadence at which arbitrageurs can act---directly shapes this loss, and the prevailing intuition holds that faster blocks are better. This paper asks whether price \emph{jumps} overturn that intuition, and finds that they impose a floor no block speed can breach.

Milionis et al.~\cite{milionis2022automated} formalize the adverse-selection cost of passive liquidity provision on a constant-function market maker (CFMM) as \emph{loss-versus-rebalancing} (LVR): the gap between a passive liquidity provider's (LP) mark-to-market return and the return of a continuously rebalanced portfolio holding the LP's instantaneous exposure. For a constant-product market maker (CPMM), a special case of the CFMM, with pool value $V(P)$ and external reference price $S_t$ following a GBM with volatility $\sigma$, the LVR rate equals $\sigma^2 V/8$. Hence, it is independent of the block-time, the rate at which arbitrageurs are able to re-align the pool. Milionis et al.~\cite{milionis2023automated} extend this to swap fees $\gamma$ and Poisson block arrivals of mean block-time $\dt$. In their low-fee fast-block regime, the LVR rate is multiplied by a strictly increasing function of $\sigma\sqrt{\dt}/\gamma$, so shorter blocks increase welfare.

This GBM-based prescription---``shorten blocks to reduce LVR''---is the implicit case for sub-second block times deployed on Solana, modern Ethereum L2s, and a host of app-chains. Empirically, however, crypto reference prices are not GBM but exhibit pronounced jumps \cite{scaillet2020high,chaim2019nonlinear,saef2024temporal}. A jump produces an instantaneous, $\dt$-independent price gap which is cleared in the very next block and the LP eats the concavity loss.

This paper asks: \emph{what does the LVR/block-time relationship look like once prices jump?} The answer is that it splits in two: one channel the block schedule governs, and one it cannot touch --- jumps drive the mispricing across the no-arbitrage band at a rate the protocol does not choose, so faster blocks clear each gap sooner without making it smaller. Calibrated to ETH/USDT, the split is stark: at Ethereum's $12$-second slot the rate is $471$~bp/yr against a jump floor of $125$; at Solana's $400$~ms the floor already dominates; and weighing LVR against Ethereum consensus cost puts the LP-side optimum near $8$~s, above today's sub-second targets of other chains.

\paragraph{Contributions.} For a CPMM under jump-diffusion with any finite-activity jump law --- Merton $(\mu,\sigma,\lambda,m,\delta)$ the running instance --- Theorem~\ref{thm:decomp} gives an additive zero-fee decomposition of the LVR rate into the classical diffusion term $\sigma^2 V/8$ and a nonnegative jump term $\tfrac{\lambda}{2}V\,\E[(e^{J/2}-1)^2]$ that carries no $\dt$. Reintroducing a swap fee $\gamma$ and discrete block time $\dt$, Theorem~\ref{thm:fee} shows that
\begin{equation}\label{eq:headline-fee}
\ell(\dt) \;=\; \tfrac{\sigma^2}{8}\,V \cdot F\!\left(\tfrac{\gamma}{\sigma\sqrt{\dt}}\right) \;+\; \lambda\,V \cdot G(\gamma; m, \delta^2) \;+\; \mathcal{E}(\dt),
\end{equation}
where $F:\R_+\to(0,1]$ is the trade-probability multiplier $P_{\mathrm{trade}}$ of Milionis et al.~\cite{milionis2023automated} (Lemma~\ref{lem:F}, Remark~\ref{rem:vsmmr}), $G = \tfrac{1}{8}\E[(|J|-\gamma)^2\1\{|J|>\gamma\}]$, and the remainder $\mathcal{E}$ --- the only place the two channels meet --- is bounded on both sides in Proposition~\ref{prop:error} under a stated mixing condition. For symmetric $\nu$ satisfying a mild aggregation hypothesis (Merton does), Theorem~\ref{thm:floor} establishes exactly, with no appeal to \eqref{eq:headline-fee}, that $\ell(\dt) \ge \lambda V G > 0$ at every $\dt$: an irreducible jump floor that no block-time reduction can eliminate, approached only as $\sqrt{\dt}$ (Corollary~\ref{cor:floor}). Finally, netting $\ell_0(\dt)$, the first two terms of \eqref{eq:headline-fee}, against pool-attributable block-production cost, Theorem~\ref{thm:planner} characterizes the LP-side planner's optimum: invariant in pool size, and in the jump parameters $(\lambda,m,\delta)$ to the order at which \eqref{eq:headline-fee} separates the channels --- jumps shift the welfare level but not the planner's marginal tradeoff.

\section{Related Work}\label{sec:related}

CFMMs are the dominant trading mechanism in decentralized finance, with the CPMM invariant first deployed at scale by Uniswap~\cite{adams2021uniswap}. Angeris et al.~\cite{angeris2020improved} characterize arbitrage equilibria, reserve dynamics and the convergence of pool prices to external references. Milionis et al.~\cite{milionis2022automated} introduce \emph{loss-versus-rebalancing} (LVR), the LP's underperformance against a continuously-rebalanced benchmark, and derive under GBM a measure-independent rate---independent of the LP's risk-neutral probability and of hedging considerations---with the full-range Uniswap-V2 specialization $\sigma^2 V/8$. Milionis et al.~\cite{milionis2023automated} extend this to proportional fees with Poisson-arrival blocks. Their key structural result, and the one this paper builds on, is that the log mispricing between pool and reference is ergodic with stationary law uniform on the no-arbitrage band and exponential tails outside it, so that arbitrage profits scale down from the frictionless LVR by the stationary probability of a profitable block---the first formal link between protocol-level block-time and LP loss. Nezlobin and Tassy~\cite{nezlobin2025lvr} extend this to general block-time distributions and Fritsch and Canidio~\cite{fritsch2024measuring} validate the scaling empirically across major Uniswap pools. Cartea, Drissi and Monga~\cite{cartea2023predictable} frame the counterpart notion of \emph{predictable loss} in continuous-time wealth dynamics and solve for optimal concentrated-liquidity provision against fee income and concentration risk, while Capponi and Jia~\cite{capponi2024price} micro-found the loss channel through a deposit game exhibiting a ``liquidity freeze'' regime.

On price dynamics, jump-diffusion has been standard in derivatives pricing since Merton~\cite{merton1976option}, with Kou's asymmetric double-exponential law~\cite{kou2002jump} and the Carr--Geman--Madan--Yor pure-jump L\'evy model~\cite{carr2002fine} as flexible alternatives, and Cont and Tankov~\cite{cont2003financial} the reference for the It\^o--L\'evy machinery used below. Empirically, crypto jumps are pervasive: Scaillet et al.~\cite{scaillet2020high} document frequent, clustered and order-flow-predictable jumps in Bitcoin at high frequency; Saef et al.~\cite{saef2024temporal} find in tick data from seven exchanges over April 2019--September 2021 that Bitcoin jumps on $\sim$$58\%$ and Ethereum on $\sim$$32\%$ of trading days, with systematic temporal clustering; and Chaim and Laurini~\cite{chaim2019nonlinear} document nonlinear dependence and tail co-movement across major cryptocurrencies. Cont~\cite{cont2001empirical} surveys the broader stylized facts of asset returns.

Block-time choice balances several welfare components beyond LP loss. On \emph{consensus cost}, Decker and Wattenhofer~\cite{decker2013information} identify block propagation delay as the primary cause of forks and Croman et al.~\cite{croman2016scaling} per-block bandwidth overhead as the binding scaling constraint. On \emph{finality}, Buterin~\cite{buterin2022paths} surveys paths to Ethereum single-slot finality, where BLS aggregation over committee attestations is the per-slot bottleneck. On \emph{decentralization and censorship resistance}, Benhaim et al.~\cite{benhaim2024scaling} show committee-based consensus attains equivalent security with committees one to two orders of magnitude smaller---legitimizing the architecture behind sub-second blocks, but exposing how small committees increase censorship vulnerability. On \emph{MEV redistribution}, Daian et al.~\cite{daian2020flash} document the maximal-extractable-value economy that block-time mediates: longer blocks expand per-block extractable value, shorter blocks fragment it and erode the searcher timing advantage. \emph{Oracle freshness}, the gap between on-chain feeds and live markets, scales directly with $\dt$. LP-side adverse selection, the subject of this paper, is one input among these.

Two gaps emerge, and this paper closes both. The LVR construct is well-developed under GBM but has not been extended to jump-diffusion reference prices despite the empirical prevalence of price discontinuities in crypto markets, and the LP-side literature lacks an explicit planner's optimum that exposes how the jump parameters enter, or fail to enter, that optimum.

\section{Model}\label{sec:model}

The model is that of Milionis et al.~\cite{milionis2022automated,milionis2023automated} with a single primitive changed: the reference price follows a Merton jump-diffusion rather than pure GBM, capturing the discontinuous price moves documented empirically in cryptocurrency markets.

\subsection{Price Process}
Let $(\Omega, \mathcal{F}, (\mathcal{F}_t), \Prob)$ be a filtered probability space supporting a standard Brownian motion $W$ and an independent Poisson random measure with intensity $\lambda \, dt \, \nu(dy)$, where $\nu$ is the distribution of the log jump size $J$, assumed to have finite activity and to satisfy $\int(e^{j/2}-1)^2\,\nu(dj)<\infty$; the quadratic results of Section~\ref{sec:fees} need only a finite second moment. Merton's $J \sim \mathcal{N}(m,\delta^2)$ is the running instance and the one calibrated in Section~\ref{sec:numerics}. Section~\ref{sec:zerofees} needs no more than the moment condition, while from Section~\ref{sec:fees} on I assume in addition that $\nu$ is \emph{symmetric} --- for Merton, $m=0$. Remark~\ref{rem:symmetry} isolates exactly where symmetry is used. The reference price $S_t$ follows
\begin{equation}\label{eq:merton}
\frac{dS_t}{S_{t-}} = \mu\,dt + \sigma \, dW_t + (e^{J_t} - 1)\,dN_t,
\end{equation}
with $N$ a Poisson process of intensity $\lambda$. Time is measured in years throughout, so $\sigma$, $\lambda$ and $\dt$ are all in annual units; block times are quoted in seconds in Section~\ref{sec:numerics} for readability only. The notation $S_{t-} := \lim_{s\uparrow t} S_s$ denotes the left limit, i.e., the pre-jump price at time $t$; using $S_{t-}$ rather than $S_t$ in the denominator makes the relative increment well-defined at jump times and ensures the integrand is predictable.

\subsection{CPMM}
A CPMM with reserves $(x_t, y_t)$ maintains the invariant $x_t y_t = L$. The marginal pool price is $P_t = y_t / x_t$, and the pool's holdings as functions of $P$ are $x(P) = \sqrt{L/P}$ and $y(P) = \sqrt{L P}$.
The pool value (in numeraire units) at marginal price $P$ is
$V(P) := y(P) + P\,x(P) = 2\sqrt{LP}$.
By the envelope identity, $V'(P) = x(P)$ and
$V''(P) = -\tfrac{1}{2}\sqrt{L}\,P^{-3/2} = -x(P)/(2P) < 0$.
Note that the pool value $V$ is concave and this concavity is the source of LVR. I work with the full-range Uniswap specialization throughout for concreteness, but per Milionis et al.~\cite[Remark~1]{milionis2022automated} the LVR construct depends only on a locally-smooth demand curve $x^*(P)$ and applies with no modification to general CFMMs and concentrated-liquidity AMMs (within active tick ranges).

\subsection{Arbitrageur}
A risk-neutral, capital-unconstrained arbitrageur observes the external price $S_t$ continuously and the pool price $P_t$. Trading on the pool incurs a proportional fee $\gamma \in [0,1)$ in log-price units (i.e., a swap effectively moves the marginal price by an extra factor $e^{\pm\gamma}$), so the no-arbitrage band is $\{|\log S - \log P| \le \gamma\}$. Blocks arrive as a Poisson process of rate $1/\dt$, so $\dt$ is the mean interblock time, and the arbitrageur can trade only when a block arrives. The reference price itself moves continuously between blocks: $S$ is an off-chain venue price, so the arbitrageur's information is continuous even though on-chain reserves settle only at block inclusion. Mempool visibility of pending transactions extends the same continuity to the on-chain order flow, so the pool price is not the only price a participant can observe between blocks. Being competitive, the arbitrageur trades myopically whenever it is profitable~\cite[\S3]{milionis2023automated}.

\subsection{LVR}
Following Milionis et al.~\cite{milionis2022automated}, define the LVR over $[0,T]$ as
\begin{equation}\label{eq:lvrdef}
\mathrm{LVR}(T) \;:=\; \int_0^T V'(P_{s-}) \, dP_s \;-\; \big[V(P_T) - V(P_0)\big].
\end{equation}
The first term is the return of a rebalancing strategy holding $V'(P_{s-}) = x(P_{s-})$ units of the risky asset at each instant; the second is the LP's mark-to-market return. The \emph{LVR rate} is $\ell(t) = \E_t[d\mathrm{LVR}_t]/dt$, evaluated at the current pool state.

In the continuous-arbitrage limit ($\gamma = 0$, $\dt \to 0$), the pool price tracks the reference price, $P_t \equiv S_t$, and \eqref{eq:lvrdef} is the It\^o--L\'evy remainder of $V(S_t)$.

\section{LVR under Jump-Diffusion: Zero-Fee Decomposition}\label{sec:zerofees}

\begin{theorem}[LVR decomposition]\label{thm:decomp}
Under the zero-fee continuous-arbitrage limit with $S_t$ following \eqref{eq:merton} and the CPMM pool value $V(P) = 2\sqrt{LP}$, the instantaneous LVR rate is
\begin{equation}\label{eq:thm1}
\ell \;=\; \underbrace{\frac{\sigma^2}{8} V(S_t)}_{\text{diffusion-LVR}} \;+\; \underbrace{\frac{\lambda}{2}\,V(S_t)\,\E[(e^{J/2} - 1)^2]}_{\text{jump-LVR}},
\end{equation}
and for $J \sim \mathcal{N}(m, \delta^2)$, $\E[(e^{J/2} - 1)^2] = e^{m + \delta^2/2} - 2 e^{m/2 + \delta^2/8} + 1 \geq 0$.
This expression is $0$ if and only if $J \equiv 0$.
\end{theorem}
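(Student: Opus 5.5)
The plan is to apply the It\^o--L\'evy formula to $V(S_t)$ and read off the LVR as the second-order remainder, split into its continuous and jump parts. In the continuous-arbitrage limit $P_t\equiv S_t$, so \eqref{eq:lvrdef} becomes $\int_0^T V'(S_{s-})\,dS_s-[V(S_T)-V(S_0)]$. It\^o's formula for a semimartingale with finitely many jumps gives
\[
V(S_T)-V(S_0)=\int_0^T V'(S_{s-})\,dS_s+\tfrac12\int_0^T V''(S_s)\,\sigma^2S_s^2\,ds+\sum_{s\le T}\bigl[V(S_s)-V(S_{s-})-V'(S_{s-})\Delta S_s\bigr].
\]
The stochastic integral cancels, so $\mathrm{LVR}(T)$ is minus the last two terms. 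Because $V$ is concave, both terms are nonpositive, and the LVR is a nondecreasing process.

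\textbf{Diffusion term.} I will use the explicit CPMM quantities from the Model section, $V'(P)=x(P)$, $V''(P)=-x/(2P)$ and $P\,x(P)=V(P)/2$. These give
\[
-\tfrac12 V''(S)\,\sigma^2S^2=\tfrac14\,\sigma^2\,S\,x(S)=\tfrac{\sigma^2}{8}\,V(S).
\]

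\textbf{Jump term.} At a jump, $S_s=S_{s-}e^{J}$. Using $V(Se^J)=V(S)e^{J/2}$ and $V'(S)\,S(e^J-1)=\tfrac{V}{2}(e^J-1)$, the summand equals
\[
V\bigl[e^{J/2}-1-\tfrac12(e^J-1)\bigr]=-\tfrac{V}{2}\bigl(e^{J/2}-1\bigr)^2 .
\]
Its negative is nonnegative.

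\textbf{Rate.} The jump sum is an integral of a nonnegative predictable function against the Poisson random measure. Its compensator is therefore $\int_0^T\lambda\,\tfrac{V(S_{s-})}{2}\,\E[(e^{J/2}-1)^2]\,ds$, which is finite by the standing assumption $\int(e^{j/2}-1)^2\,\nu(dj)<\infty$. Taking $\E_t[\,\cdot\,]/dt$ then yields \eqref{eq:thm1}. This is the step I expect to need the most care. The assumption does not guarantee $\E|e^J-1|<\infty$, so I will not treat $\int V'\,dS$ and $V(S_T)$ separately in expectation. Instead I will define the rate as the drift (compensator density) of the increasing process $\mathrm{LVR}$ itself. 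This rests on the pathwise identity above and monotone convergence, localizing along stopping times if needed, so that only the combined, nonnegative jump integrand is ever integrated.

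\textbf{Merton case and the zero condition.} For $J\sim\mathcal N(m,\delta^2)$, expand $(e^{J/2}-1)^2=e^J-2e^{J/2}+1$ and use the lognormal moments $\E e^{J}=e^{m+\delta^2/2}$ and $\E e^{J/2}=e^{m/2+\delta^2/8}$. This gives the stated closed form, and it is $\ge0$ as the expectation of a square. For the ``if and only if'' clause, read as applying to the jump-LVR term with $\lambda>0$ and $V>0$: the term vanishes iff $(e^{J/2}-1)^2=0$ $\nu$-a.s., iff $J=0$ a.s. Within the Merton family this means $m=\delta=0$, consistent with Jensen's inequality $e^{m/2+\delta^2/8}\le e^{m/2+\delta^2/4}$, which is strict unless the law is degenerate.
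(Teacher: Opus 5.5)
Your proposal is correct and follows the paper's proof essentially step for step: It\^o--L\'evy applied to $V(S_t)$, cancellation of $\int V'(S_{s-})\,dS_s$, the CPMM identities giving $\tfrac{\sigma^2}{8}V$ and $-\tfrac{V}{2}(e^{J/2}-1)^2$, the compensator formula, and the lognormal moments. One small correction to a side remark: $\nu$ is a probability measure and $e^{j}\le 2(e^{j/2}-1)^2+2$, so the standing assumption already gives $\E[e^J]<\infty$ and hence $\E|e^J-1|<\infty$; integrating only the combined nonnegative jump integrand is therefore harmless but not actually needed.
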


\deferproof{A.1}

\begin{remark}[Measure]\label{rem:measure}
The drift enters \eqref{eq:lvrdef} only through $\int V'(S_{s-})\,dS_s$, which appears in both terms and cancels, so \eqref{eq:thm1} involves $\sigma$ and the jump measure alone and no normalization of $\mu$ is required. Under GBM the rate is in this sense measure-independent~\cite{milionis2022automated}; with jumps it is not, since an equivalent change of measure rescales $\lambda$ and reshapes $\nu$, so Theorem~\ref{thm:decomp} is a statement under the physical measure and calibration of the model is versus realized-price observations (Section~\ref{sec:numerics}).
\end{remark}

\begin{remark}\label{rem:generaljumps}
The proof depends on the jump distribution only through $\E[(e^{J/2}-1)^2]$. For any compound-Poisson jump measure $\nu$ with $\int (e^{j/2}-1)^2\,\nu(dj) < \infty$ and $\nu \neq \delta_0$, the jump-LVR rate is $\tfrac{\lambda V}{2}\int (e^{j/2}-1)^2\,\nu(dj) > 0$. Merton is illustrative; Kou's double-exponential law admits the same structure with different moments, as does any finite-activity $\nu$. The results of Section~\ref{sec:fees} likewise use the shape of $\nu$ only through symmetry and the aggregation hypothesis of Theorem~\ref{thm:floor}, not through Gaussianity.
\end{remark}

\begin{remark}\label{rem:dtinvariant}
Theorem~\ref{thm:decomp} does not involve $\dt$. At $\gamma=0$, the arbitrageur eliminates any pricing gap within the next block, so block-time refinement only redistributes \emph{when} the loss is realized, not its rate. The block-time becomes economically relevant only once fees create a no-arbitrage band.
\end{remark}

\section{Fees and Block Time}\label{sec:fees}

I now allow $\gamma > 0$ and finite $\dt > 0$. Write $z_t := \log(S_t/P_t)$ for the log mispricing between reference and pool. At a block arrival $\tau_i$ the arbitrageur trades iff $|z|>\gamma$ and moves the pool to the near edge of the no-arbitrage band, so $z_{\tau_i} = \bound(z_{\tau_i^-},-\gamma,\gamma)$; between blocks $P$ is frozen and $z$ inherits the dynamics of $\log S$. The mispricing is therefore a \emph{persistent state}: a trade leaves it at $\pm\gamma$, not at $0$, and absent a trade it carries forward. Writing $\mathcal{J}_i := \bound(z_{\tau_i^-},-\gamma,\gamma)-z_{\tau_i^-}$ for the arbitrage resets and $T_k$ for the jump times of $N$, the It\^o--L\'evy formula applied to $\log S$ under \eqref{eq:merton} gives
\begin{equation}\label{eq:ourz}
z_t \;=\; \big(\mu-\tfrac12\sigma^2\big)t + \sigma W_t + \underbrace{\sum_{k:\,T_k\le t} J_k}_{\text{exogenous, rate }\lambda} + \underbrace{\sum_{i:\,\tau_i\le t}\mathcal{J}_i}_{\text{endogenous, rate }1/\dt}.
\end{equation}
The mispricing is driven by two point processes: the exogenous price jumps, which push $z$ \emph{away} from the band at a rate the protocol cannot choose, and the block-triggered resets, which push it back. Every result below follows from that asymmetry\ifextended\ (Figure~\ref{fig:zpath}, Appendix~\ref{app:figs}, illustrates)\else\ (illustrated in Appendix~B of the extended version~\cite{bundi2026ext})\fi. As in~\cite[Assumption~2]{milionis2023automated} I set $\mu = \tfrac{1}{2}\sigma^2 - \lambda m$.

At a block with pre-trade mispricing $z$ the pool's marginal log-price moves by $\xi = \mathrm{sign}(z)(|z|-\gamma)_+$, and by the computation in the proof of Theorem~\ref{thm:decomp} the LP's loss is $h(\xi) := \tfrac{V}{2}(e^{\xi/2}-1)^2$. Blocks being Poisson and independent of $z$, the pre-block law of $z$ is its stationary law $\pi$, and
\begin{equation}\label{eq:rate-stationary}
\ell(\dt) \;=\; \frac{1}{\dt}\,\E_{\pi}\big[h(\xi)\big], \qquad h(\xi) \;=\; \tfrac{V}{8}\xi^2 \;+\; \tfrac{V}{2}\,\rho(\xi), \quad \rho(\xi) := (e^{\xi/2}-1)^2 - \tfrac{\xi^2}{4}.
\end{equation}
Two facts about the split in \eqref{eq:rate-stationary} carry the results below. The quadratic surrogate $\tfrac{V}{8}\xi^2$ is a \emph{convex} function of $z$, which $h$ itself is not --- a downward move of arbitrary size costs at most $V/2$ --- and convexity is what Theorem~\ref{thm:floor} needs. And the curvature residual $\rho$ is signed under symmetry: pairing $\pm\xi$,
\begin{equation}\label{eq:rhosign}
\tfrac12\big[\rho(\xi)+\rho(-\xi)\big] \;=\; \cosh\xi - 2\cosh\tfrac{\xi}{2} + 1 - \tfrac{\xi^2}{4} \;=\; \sum_{n\ge2}\frac{1-2^{1-2n}}{(2n)!}\,\xi^{2n} \;\ge\; 0,
\end{equation}
so for any symmetric law of $\xi$ the exact loss is at least the quadratic surrogate, with leading term $\tfrac{7}{192}\xi^4$. Both statements are exact, for every $\xi\in\R$.

\begin{remark}[Driftlessness and symmetry]\label{rem:symmetry}
Setting $\mu = \tfrac{1}{2}\sigma^2 - \lambda m$ makes $\log S_t$ driftless. Additionally setting $m=0$ makes $z$ symmetric. The two properties are used differently below: Driftlessness is a modelling restriction rather than a normalization, since the rate at $\gamma>0$ depends on the full law of $z$ (its cost is small, as a non-zero drift $\Delta$ tilts the stationary law by only $2\Delta/\sigma^2$~\cite[Thm.~8]{milionis2023automated}, worth under $0.01\%$ of $\ell$ for any $|\Delta|\le1.6$/yr, which covers the empirical ETH drift by a wide margin); Symmetry of $\nu$ is used only to sign the remainder and to prove the floor, and it is symmetry that is needed, not Gaussianity: Theorems~\ref{thm:decomp}, \ref{thm:fee} and \ref{thm:planner} and Lemmas~\ref{lem:F}--\ref{lem:G} hold for arbitrary $\nu$, while Proposition~\ref{prop:error} and Theorem~\ref{thm:floor} require it. Everything the block schedule governs --- the separated rate $\ell_0$, the planner's cubic and its invariances --- is therefore general; symmetry enters only in transferring those properties to the exact rate $\ell$.
\end{remark}

\subsection{Diffusion Contribution with Fees}

Everything therefore turns on $\pi$. For $\lambda=0$ it is given in closed form by~\cite[Thm.~1]{milionis2023automated}: with $\theta := \sqrt{2/\dt}/\sigma$ and $\eta := \theta\gamma = \sqrt{2}\gamma/(\sigma\sqrt{\dt})$,
\begin{equation}\label{eq:pi0}
p(x) \;=\; c \ \ (|x|\le\gamma), \qquad p(x) \;=\; c\,e^{-\theta(|x|-\gamma)} \ \ (|x|>\gamma), \qquad c=\frac{\theta}{2(1+\eta)},
\end{equation}
uniform on the band with mass $\eta/(1+\eta)$ and exponential tails of mass $\tfrac12(1+\eta)^{-1}$ each, so $\Prob_\pi(|z|>\gamma) = (1+\eta)^{-1}$. Two features of \eqref{eq:pi0} carry the argument: the band holds most of the mass when blocks are fast, which is why the multiplier below vanishes; and the tails decay on the scale $1/\theta = \sigma\sqrt{\dt/2}$, so a jump with $|J|\gg\gamma$ lands far outside both band and tail and is cleared in full at the next block however small $\dt$ is. With $\lambda>0$ the stationary law solves an integro-differential rather than an ordinary differential equation, the compound-Poisson generator adding a non-local term of relative order $\lambda\dt$; that perturbation is carried through Proposition~\ref{prop:error}\ifextended\ and illustrated in Figure~\ref{fig:statlaw}\else\ and illustrated in Appendix~B of the extended version~\cite{bundi2026ext}\fi.

\begin{lemma}[Diffusion LVR with fees]\label{lem:F}
Let $\lambda=0$. For the quadratic surrogate in \eqref{eq:rate-stationary} the LVR rate is exactly
\begin{equation}\label{eq:Fresult}
\ell_{\mathrm{diff}}(\dt) \;=\; \tfrac{\sigma^2}{8}\,V \cdot F(\kappa),
\qquad
F(\kappa) \;=\; \frac{1}{1+\sqrt{2}\,\kappa} \;=\; \Prob_\pi\big(|z|>\gamma\big),
\qquad \kappa \;:=\; \frac{\gamma}{\sigma\sqrt{\dt}},
\end{equation}
the stationary probability that an arriving block offers a profitable trade. $F$ maps $\R_+$ onto $(0,1]$, satisfies $F(0)=1$ and $F(\kappa)\sim1/(\sqrt2\kappa)$, and is strictly decreasing, with $F'(\kappa) = -\sqrt2\,(1+\sqrt2\,\kappa)^{-2} < 0$.
\end{lemma}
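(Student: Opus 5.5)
The plan is a direct computation of the quadratic surrogate in \eqref{eq:rate-stationary} against the closed-form stationary law \eqref{eq:pi0}. With $\lambda=0$ and the driftless normalization $\mu=\tfrac12\sigma^2$ of Remark~\ref{rem:symmetry}, the mispricing $z$ between blocks is a driftless Brownian motion with variance rate $\sigma^2$. It is reset by $\bound(\cdot,-\gamma,\gamma)$ at Poisson block times. This is exactly the setting of \cite[Thm.~1]{milionis2023automated}, so its stationary density is \eqref{eq:pi0}. Because block arrivals are Poisson and independent of $z$, PASTA makes the pre-block law of $z$ equal to $\pi$. A renewal-reward (ergodic) argument then gives $\ell_{\mathrm{diff}}(\dt) = \dt^{-1}\E_\pi[\tfrac V8 \xi^2]$ with $\xi=\mathrm{sign}(z)(|z|-\gamma)_+$, as already recorded in \eqref{eq:rate-stationary}. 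I take that identity as given and only evaluate the expectation.

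The key step is the tail integral. Since $\xi=0$ on the band, only the two exponential tails contribute. By symmetry, and substituting $u=|z|-\gamma$,
\[
\E_\pi[\xi^2] \;=\; 2c\int_0^\infty u^2 e^{-\theta u}\,du \;=\; \frac{4c}{\theta^3} \;=\; \frac{2}{\theta^2(1+\eta)} \;=\; \frac{\sigma^2\dt}{1+\eta},
\]
using $c=\theta/(2(1+\eta))$ and $\theta^2=2/(\sigma^2\dt)$. Multiplying by $V/(8\dt)$ gives $\tfrac{\sigma^2}{8}V\,(1+\eta)^{-1}$. Since $\eta=\sqrt2\gamma/(\sigma\sqrt{\dt})=\sqrt2\,\kappa$, this is $\tfrac{\sigma^2}{8}V\,F(\kappa)$ with $F(\kappa)=1/(1+\sqrt2\kappa)$. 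The identification $F(\kappa)=\Prob_\pi(|z|>\gamma)$ is the tail-mass computation $2\cdot c/\theta=(1+\eta)^{-1}$ already noted after \eqref{eq:pi0}. It is worth remarking that the factor $\dt$ cancels exactly because the conditional second moment of the overshoot, $2/\theta^2=\sigma^2\dt$, is precisely the diffusive variance per block.

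The analytic properties are elementary. $F(0)=1$ holds. $F$ is continuous and strictly decreasing with $F'(\kappa)=-\sqrt2(1+\sqrt2\kappa)^{-2}<0$, and $F(\kappa)\to0$ as $\kappa\to\infty$. Together these give that $F$ maps $\R_+$ onto $(0,1]$ and that $F(\kappa)\sim1/(\sqrt2\kappa)$.

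I expect the only real obstacle to be justifying the use of $\pi$ rather than computing with it. One needs ergodicity of the reset process and that the long-run average loss per unit time equals $\dt^{-1}$ times the expected per-block loss under the stationary pre-block law. I would dispose of this by invoking \cite[Thm.~1]{milionis2023automated} together with PASTA and renewal-reward for the Poisson block clock. Exactness holds because the lemma concerns the quadratic surrogate, so no curvature remainder $\rho$ enters.
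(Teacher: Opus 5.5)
Your proposal is correct and matches the paper's proof essentially line for line. You restrict the quadratic surrogate to the exponential tails of \eqref{eq:pi0} and evaluate $2c\int_0^\infty u^2e^{-\theta u}\,du = 4c/\theta^3 = \sigma^2\dt/(1+\eta)$, so that $\dt$ cancels. You then identify $(1+\eta)^{-1}$ with the tail mass, taking \eqref{eq:rate-stationary} and \cite[Thm.~1]{milionis2023automated} as given exactly as the paper does, and the properties of $F$ are elementary.
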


\deferproof{A.2}

Fritsch and Canidio~\cite{fritsch2024measuring} test this scaling empirically.

\begin{remark}[Relation to Milionis et al.]\label{rem:vsmmr}
Under a GBM reference price the mispricing is a regulated Brownian motion driven by the single point process of block arrivals, whose marks always push $z$ \emph{towards} the band~\cite[Eq.~(9)]{milionis2023automated}. Equation~\eqref{eq:ourz} adds a second, exogenous driver whose marks push $z$ away from it; the floor of Corollary~\ref{cor:floor} is the consequence. The per-trade loss $\tfrac{V}{8}\xi^2$ entering \eqref{eq:rate-stationary} is their Lemma~2 profit up to $e^{\gamma} = 1+O(\gamma)$, and \eqref{eq:Fresult} is exact for the quadratic loss.
\end{remark}

\subsection{Jump Contribution with Fees}

When the reference price jumps by $J$, the arbitrageur captures the concavity loss only if the jump exceeds the fee threshold (otherwise the no-arbitrage band absorbs it without realization).

\begin{lemma}[Jump LVR with fees]\label{lem:G}
Under \eqref{eq:merton}, and for the same quadratic surrogate underlying \eqref{eq:Fresult}, the separated contribution of jumps to the LVR rate is $\lambda V \cdot G_\nu(\gamma)$, where
\begin{equation}\label{eq:G}
G_\nu(\gamma) \;:=\; \tfrac{1}{8}\,\E_\nu\!\Big[(|J| - \gamma)^2 \cdot \1\{|J| > \gamma\}\Big],
\end{equation}
a functional of $\nu$ alone; I write $G(\gamma;m,\delta^2)$ for its Merton value.
For $J \sim \mathcal{N}(m,\delta^2)$, $G$ is strictly positive whenever $\Prob(|J|>\gamma) > 0$, satisfies $G(0;m,\delta^2) = (m^2 + \delta^2)/8$, and is strictly decreasing in $\gamma$.
\end{lemma}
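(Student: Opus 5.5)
The lemma has two logically separate parts: identifying the separated jump contribution as $\lambda V\,G_\nu(\gamma)$, and the elementary properties of $G$ in the Merton case. I will do them in that order.

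For the identification, I will use the definition of the \emph{separated} rate implicit in \eqref{eq:headline-fee}. Treat each exogenous jump as arriving when $z$ sits in the band. To leading order in $\dt$ the stationary law \eqref{eq:pi0} concentrates there, with tails of width $O(\sigma\sqrt{\dt})$, and the coupling with the diffusion is exactly what is deferred to $\mathcal{E}(\dt)$ in Proposition~\ref{prop:error}. For this separated term I will take the pre-jump mispricing to be $0$. A jump of log-size $J$ then produces a post-jump mispricing $J$. The next block clears it to the band edge, giving the pool move $\xi=\mathrm{sign}(J)(|J|-\gamma)_+$ and, under the quadratic surrogate of \eqref{eq:rate-stationary}, the loss $\tfrac{V}{8}(|J|-\gamma)^2\1\{|J|>\gamma\}$. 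Jumps occur at rate $\lambda$ and each is realized exactly once at the following block, so the block time only delays realization (as in Remark~\ref{rem:dtinvariant}). By the compensator of the Poisson random measure, $\lambda\,dt\,\nu(dj)$, the expected loss per unit time is $\lambda\int \tfrac{V}{8}(|j|-\gamma)_+^2\,\nu(dj)=\lambda V G_\nu(\gamma)$.

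I expect the main obstacle to be stating precisely what "separated" means. One must also ensure there is no double counting with the diffusion multiplier $F$: the diffusion term of Lemma~\ref{lem:F} accounts for excursions of the continuous part, and the jump term accounts for the discrete displacement. I would say explicitly that cross terms (a jump landing from a nonzero pre-jump position, or a second jump or diffusive motion before the next block) are of order $\lambda\dt$ or $\sigma\sqrt{\dt}$ relative to the main terms and belong to $\mathcal{E}$. No property of $\nu$ beyond a finite second moment is needed here, which is consistent with Remark~\ref{rem:symmetry}.

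For the Merton properties, the integrand $g_\gamma(j)=(|j|-\gamma)_+^2$ is nonnegative and strictly positive on $\{|j|>\gamma\}$. Hence $G>0$ whenever that set has positive probability, which for a Gaussian with $\delta>0$ is always the case. At $\gamma=0$ the integrand is $j^2$, so $G(0)=\E[J^2]/8=(m^2+\delta^2)/8$. For monotonicity, for each fixed $j$ the map $\gamma\mapsto(|j|-\gamma)_+^2$ is nonincreasing, with derivative $-2(|j|-\gamma)_+$. Differentiating under the expectation, which dominated convergence justifies by the bound $2|J|$, gives
\[
\partial_\gamma G=-\tfrac14\,\E\big[(|J|-\gamma)\1\{|J|>\gamma\}\big]<0
\]
whenever $\Prob(|J|>\gamma)>0$. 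This gives strict decrease. Closed forms in terms of $\Phi$ and $\varphi$ are possible but are not needed for the stated properties.
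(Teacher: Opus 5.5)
Your proposal is essentially the paper's proof. You take the quadratic-surrogate loss $\tfrac{V}{8}(|J|-\gamma)_+^2$ of a single jump cleared to the band edge, drop the carried state and intra-interval diffusion, charge it at the full rate $\lambda$, and defer the interaction and multi-jump/thinning deviations to $\mathcal{E}$; the paper organizes this by the geometric jump count per inter-block interval rather than the compensator. The Merton properties are then elementary, and your derivative $\partial_\gamma G=-\tfrac14\E[(|J|-\gamma)_+]$ is if anything cleaner than the paper's pointwise remark.

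One side claim is wrong, though it does not affect the lemma. The carried state ranges over the band $[-\gamma,\gamma]$, whose width does not shrink with $\dt$. So the cross term from a nonzero pre-jump position is of relative order $\gamma^2/\delta^2$, not $O(\lambda\dt)$ or $O(\sigma\sqrt{\dt})$ --- which is why the interaction bound of Proposition~\ref{prop:error} tends to $\lambda V\gamma^2/8$ rather than $0$ as $\dt\to0$. Its first-order part vanishes only when $\nu$ is symmetric.
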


\deferproof{A.3}

A closed-form evaluation for symmetric Merton jumps ($m=0$) follows from the truncated-normal second-moment identity
$\Psi(x) := \E\big[(|Z|-x)_+^2\big] = 2\big[(1+x^2)\Phi(-x) - x\varphi(x)\big]$ for $Z\sim\mathcal N(0,1)$.
Letting $\zeta := \gamma/\delta$, this yields
\begin{equation}\label{eq:Gclosed}
G(\gamma; 0, \delta^2) \;=\; \tfrac{\delta^2}{4}\,\big[(1+\zeta^2)\,\Phi(-\zeta) - \zeta\,\varphi(\zeta)\big] \;=\; \tfrac{\delta^2}{8}\,\Psi(\zeta).
\end{equation}
The jump term thus carries a fee discount of its own, $\Psi$ at the jump-size scale $\delta$; crucially it is $\Psi$, not $F$, and it does not involve $\dt$, because the size of a jump is set by $\delta$ rather than by how long the block lasted. At $\gamma=0$, $G(0;0,\delta^2) = \delta^2/8$. For asymmetric Merton ($m \neq 0$) and other jump distributions, $G$ admits analogous truncated-moment representations.

\begin{remark}[The curvature residual in closed form]\label{rem:curvature}
Lemmas~\ref{lem:F} and~\ref{lem:G} evaluate the quadratic surrogate of \eqref{eq:rate-stationary}. The residual $\tfrac{V}{2}\E_\pi[\rho(\xi)]$ they omit is available in closed form on both channels. On the diffusion channel the stationary tails are exponential with rate $\theta$, so the exact-$h$ rate is $\tfrac{V}{2\dt}\Lambda(\theta)/(1+\eta)$ with $\Lambda(\theta) = \theta^2/(\theta^2-1) - 2\theta^2/(\theta^2-\tfrac14) + 1$, and
$2\theta^2\Lambda(\theta) = 1 + \tfrac{7}{4}\theta^{-2} + O(\theta^{-4}) = 1 + \tfrac{7}{8}\sigma^2\dt + O((\sigma^2\dt)^2)$,
a relative correction of $2.2\times10^{-7}$ at $\dt=12$~s that vanishes as $\dt\to0$. On the jump channel the same substitution turns $G$ into a truncated Gaussian moment-generating expression $G^{h}$ in $\Phi$, with $G^{h}/G - 1 = \tfrac{7}{16}\delta^2 + O(\delta^4) = 1.6\times10^{-4}$ at $\delta=0.0192$. Both corrections are nonnegative by \eqref{eq:rhosign} whenever $\pi$ is symmetric, and both are carried as exactly-evaluated terms of $\mathcal{E}$ in Proposition~\ref{prop:error} --- so nothing below requires $\delta$ or $\sigma\sqrt{\dt}$ to be small. The unbounded support of $J$ is likewise immaterial: $\E[\rho(\xi)]$ is finite.
\end{remark}

\begin{remark}\label{rem:lvrequivalence}
Equation \eqref{eq:Gclosed} reveals a clean equivalence: to leading order in $\delta$, a Poisson stream of jumps with standard deviation $\delta$ contributes the same LVR rate as a diffusion of effective volatility $\sigma_J := \delta\sqrt{\lambda}$, both giving $\sigma_J^2 V/8$ at $\gamma=0$. Fees break it asymmetrically: the diffusion suffers the $F(\gamma/(\sigma\sqrt{\dt}))$ discount, which vanishes as $\dt \to 0$, while jumps suffer only the $\dt$-invariant $\Psi(\gamma/\delta)$ discount. The jump floor's persistence is the formal expression of this asymmetry.
\end{remark}

\subsection{Main Formula}

Combining the diffusion contribution~\eqref{eq:Fresult} with the jump contribution of Lemma~\ref{lem:G} yields the LVR rate as a sum of a $\dt$-dependent diffusion term and a $\dt$-invariant jump floor.

\begin{theorem}[LVR rate under jump-diffusion with fees]\label{thm:fee}
Under \eqref{eq:merton} with swap fee $\gamma \in [0,1)$ and mean block time $\dt > 0$, the LVR rate of a CPMM satisfies
\begin{equation}\label{eq:main}
\boxed{\;\ell(\dt) \;=\; \ell_0(\dt) + \mathcal{E}(\dt), \qquad \ell_0(\dt) \;:=\; \tfrac{\sigma^2}{8}\,V \cdot F\!\Big(\tfrac{\gamma}{\sigma\sqrt{\dt}}\Big) \;+\; \lambda \, V \cdot G(\gamma; m, \delta^2),\;}
\end{equation}
with $F$ as in \eqref{eq:Fresult} and $G$ as in \eqref{eq:G}. Equation \eqref{eq:main} is a definition of $\mathcal{E}$ and carries no content until $\mathcal{E}$ is controlled; Proposition~\ref{prop:error} bounds it on both sides. I call $\ell_0$ the \emph{separated rate}: it is the object in which the two channels do not interact, and the properties established for it below are properties of $\ell_0$ unless stated otherwise. The one property that transfers to $\ell$ itself with no remainder is the floor, Theorem~\ref{thm:floor}.
\end{theorem}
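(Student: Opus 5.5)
Read literally, Theorem~\ref{thm:fee} asserts only that $\ell(\dt)$ exists, is finite, and can be written as $\ell_0(\dt)+\mathcal{E}(\dt)$. The statement itself says \eqref{eq:main} defines $\mathcal{E}$. The proof therefore has two parts. First, I show that both $\ell(\dt)$ and $\ell_0(\dt)$ are well-defined finite quantities, so that $\mathcal{E}:=\ell-\ell_0$ is a genuine finite number. Second, I justify why $\ell_0$ is the natural split, by identifying each of its terms with an exact limiting computation. That identification is what gives the definition content before Proposition~\ref{prop:error}.

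For finiteness of $\ell_0$, I invoke Lemma~\ref{lem:F}: $F(\kappa)\in(0,1]$ is well defined for every $\kappa=\gamma/(\sigma\sqrt{\dt})\ge0$, hence for every $\dt>0$ and $\gamma\in[0,1)$. I then invoke Lemma~\ref{lem:G}: $G_\nu(\gamma)\le\tfrac18\E[J^2]<\infty$ under the standing finite-second-moment assumption on $\nu$, and this holds in particular for Merton.

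For finiteness of $\ell$, I use \eqref{eq:rate-stationary}, $\ell=\dt^{-1}\E_\pi[h(\xi)]$. This needs two facts.
\begin{itemize}
\item \emph{Existence of the stationary law.} The mispricing $z$ is a Markov process: a driftless L\'evy process, by the choice of $\mu$, with Poisson resets into $[-\gamma,\gamma]$ at rate $1/\dt$. I obtain positive recurrence from a Lyapunov function $V(z)=z^2$. Between blocks the generator contributes $\sigma^2+\lambda\E J^2$, while a block arrival contributes at most $-\dt^{-1}(z^2-\gamma^2)$. This gives a Foster--Lyapunov drift condition, hence a stationary $\pi$ with $\E_\pi z^2<\infty$.
\item \emph{Integrability of $h$.} Since $|\xi|\le|z|$, the bound $h(\xi)\le \tfrac V2(e^{|z|/2}+1)^2$ is too crude. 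Instead I split $h=\tfrac V8\xi^2+\tfrac V2\rho(\xi)$. The quadratic part is integrable by the moment bound. For $\rho$, I need $\E_\pi e^{|z|}<\infty$ on the upward side, which follows from a Lyapunov function $\cosh z$ together with the standing condition $\int(e^{j/2}-1)^2\nu(dj)<\infty$. The downward side is bounded by $V/2$.
\end{itemize}

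For the justification of the split, I check three limiting cases.
\begin{itemize}
\item At $\lambda=0$, Lemma~\ref{lem:F} shows that the quadratic surrogate of $\ell$ equals exactly the first term of $\ell_0$.
\item As $\dt\to0$ with $\sigma=0$, each jump is cleared in the next block from a position inside the band. For a symmetric law the leading-order loss is then $\tfrac V8\E(|J|-\gamma)_+^2\lambda$, the second term of $\ell_0$.
\item At $\gamma=0$, $\ell_0$ reduces to the quadratic version of Theorem~\ref{thm:decomp}.
\end{itemize}
Together these show that $\mathcal{E}$ collects only the interaction terms (jumps landing from the exponential tail instead of from the band, and the $O(\lambda\dt)$ perturbation of $\pi$) and the curvature residual of Remark~\ref{rem:curvature}.

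The main obstacle is the exponential-moment step for $\pi$ with $\lambda>0$. Merton's $J$ has Gaussian tails, so $\E e^{J}$ is finite and the $\cosh$ Lyapunov argument should close with killing rate $1/\dt$. For a general $\nu$, however, only $\E e^{J}<\infty$ is implied by the standing assumption, and I would need to check that this suffices. If the argument fails, the fallback is to state finiteness of $\mathcal{E}$ under that moment condition and defer all quantitative control to Proposition~\ref{prop:error}.
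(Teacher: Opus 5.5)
Your reading of \eqref{eq:main} as a definition of $\mathcal{E}$ matches the paper. The paper gives no proof of this theorem: it takes the two terms of $\ell_0$ from Lemma~\ref{lem:F} and Lemma~\ref{lem:G} and puts all the content into Proposition~\ref{prop:error}. Your finiteness of $\ell_0$ is also fine. The trouble is in what you add.

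\textbf{Finiteness of $\ell$ fails for large $\dt$.} Your first part claims $\ell(\dt)<\infty$ for every $\dt>0$, and that is false, so the exponential-moment step cannot be completed. Because $h(\xi)\sim\tfrac V2 e^{\xi}$ as $\xi\to+\infty$ while $h\le V/2$ for $\xi\le 0$, finiteness of $\ell$ is equivalent to $\E_\pi[e^{z}]<\infty$. The pre-block state is $z=M+\Delta$, with $|M|\le\gamma$ pathwise and $\Delta$ the driftless L\'evy increment over an independent $\mathrm{Exp}(1/\dt)$ interval. Hence $\E_\pi[e^{z}]=\E[e^{M}]\,(1-\dt\,\beta)^{-1}$ if $\dt\,\beta<1$ and $+\infty$ otherwise, where $\beta:=\tfrac{\sigma^2}{2}+\lambda\big(\E[e^{J}]-1-\E[J]\big)$. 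At $\lambda=0$ this is the condition $\theta>1$, i.e.\ $\dt<2/\sigma^2$, which is the pole of $\Lambda(\theta)$ in Remark~\ref{rem:curvature}. At the calibration, $1/\beta\approx 2.6$ years. Your Lyapunov argument ``closes with killing rate $1/\dt$'' only when $1/\dt>\beta$; beyond that, $\ell=\mathcal{E}=+\infty$. Your second-moment argument is fine for every $\dt$. Two further points on the method. First, $\cosh z$ needs $\E[e^{-J}]<\infty$, which the standing condition does not give, since $(e^{j/2}-1)^2\to 1$ as $j\to-\infty$; the one-sided $e^{z}$ is what you need. Second, the representation $z=M+\Delta$ makes Foster--Lyapunov unnecessary: the post-block chain lives on the compact band, and every moment follows from those of $\Delta$. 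The repair is either to restrict the finiteness claim to $\dt<1/\beta$, which covers every block time of practical interest, or to let $\mathcal{E}$ take values in $(-\infty,+\infty]$, as the paper tacitly does.

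\textbf{The limiting-case identification is not exact.} The $\lambda=0$ and $\gamma=0$ checks are exact. At $\gamma=0$ every block resets $z$ to $0$, so the surrogate rate is $\tfrac{V}{8\dt}\E[\Delta^2]=\tfrac V8(\sigma^2+\lambda\E[J^2])$ for every $\dt$. The $\sigma=0$, $\dt\to0$ check is not exact. On a one-jump interval the loss is $\tfrac V8\,g(M+J)$, not $\tfrac V8\,g(J)$, where $M$ is the carried in-band state. That state sits at $\pm\gamma$ after every trade and does not shrink with $\dt$. By convexity of $g$ and $\E[M]=0$, $\E[g(M+J)]$ exceeds $\E[g(J)]$ by an amount of order $\gamma^2\,\Prob(|J|>\gamma)$. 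This is the interaction term of Proposition~\ref{prop:error}, whose bound tends to $\lambda V\gamma^2/8$, not $0$, as $\dt\to0$. Your inventory of $\mathcal{E}$ is therefore incomplete. Besides the curvature residual, it contains:
\begin{itemize}
\item this carried-state interaction, not only jumps ``from the exponential tail'';
\item the thinning and multi-jump terms;
\item a stationary-law term whose bound in the paper is $O(\lambda\gamma^2)$ rather than $O(\lambda\dt)$, because the per-block $O(\lambda\dt)$ perturbation accumulates over the band's relaxation time $\gamma^2/\sigma^2$.
\end{itemize}
None of this is needed for a definitional statement. But the jump term of $\ell_0$ is the separated charge of Lemma~\ref{lem:G}, correct only to leading order in $\gamma/\delta$, not an exact limit of $\ell$. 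It should not be presented as giving the split exact content.
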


\begin{proposition}[Two-sided bound on $\mathcal{E}$]\label{prop:error}
Let $\nu$ be symmetric, and let $n_*$ denote the least $n$ with $\sup_{m,m'}\|K_0^{n}(m,\cdot)-K_0^{n}(m',\cdot)\|_{\mathrm{TV}}\le\tfrac12$, where $K_0$ is the post-block kernel at $\lambda=0$. Assume the mixing condition
\begin{equation}\label{eq:mix}
n_* \;\le\; C\,(1+\eta^2), \qquad C=2.
\end{equation}
Only $\eta$ enters \eqref{eq:mix}; it is checked numerically over every block time, volatility and Uniswap fee tier considered here. It is used solely to bound the finite-$\lambda$ remainder in the separated expression. With $g(d):=(|d|-\gamma)_+^2$, the separated rate $\ell_0$ differs from the exact stationary rate \eqref{eq:rate-stationary} in four ways: it charges $\E[g(J)]$ in place of $\E[g(M+X+J)]$ on the jump term, it charges that term at the full jump rate $\lambda$ rather than the thinned single-jump block rate $\lambda(1+\lambda\dt)^{-2}$, it evaluates the diffusion term against the $\lambda=0$ law \eqref{eq:pi0} and block clock, and it replaces $h$ by its quadratic surrogate. Each is signed, so the remainder is bounded asymmetrically (with $\delta^2 := \E_\nu[J^2]$ for non-Merton $\nu$):
\begin{multline*}
-\underbrace{\tfrac52\,\lambda\dt\,\ell_{\mathrm{diff}}(\dt)}_{\text{block clock}} - \underbrace{\tfrac{\lambda V}{4}\big(2\gamma^2+\sigma^2\dt\big)}_{\text{stationary law}} - \underbrace{2\lambda^2\dt\,V G_\nu(\gamma)}_{\text{thinning}}
\;\le\; \mathcal{E}(\dt)\\
\;\le\;
\underbrace{\tfrac{\lambda V}{8}\big(2\sigma^2\dt + \gamma^2\big)}_{\text{interaction}}
+ \underbrace{\tfrac{\lambda V}{4}\big(2\gamma^2+\sigma^2\dt\big)}_{\mathclap{\text{stationary law}}} + \underbrace{\mathcal{E}_{\mathrm{curv}}}_{\mathclap{\text{curvature}}} + \underbrace{\tfrac{V\lambda^2\dt}{8}\big(\gamma^2+3\sigma^2\dt+2\delta^2\big)}_{\mathclap{\text{multi-jump}}},
\end{multline*}
where the interaction term uses only $|M|\le\gamma$, which holds pathwise since a block leaves the pool inside the band, and $\mathcal{E}_{\mathrm{curv}} = \tfrac{V}{2}\E_\pi[\rho(\xi)] \ge 0$ is the curvature residual, evaluated exactly in closed form in Remark~\ref{rem:curvature}. The stationary-law term covers the one place $\ell_{\mathrm{diff}}$ is evaluated against \eqref{eq:pi0} rather than the true law of the carried state, which jump feedback shifts. Its leading piece $\lambda V\gamma^2/2$ does not vanish with $\dt$, because the band relaxes on the timescale $\gamma^2/\sigma^2$ whatever the block schedule, so an $O(\lambda\dt)$ perturbation per block accumulates over $O(\dt^{-1})$ blocks. The thinning term is the deficit of charging every jump at rate $\lambda$ when a fraction $1-(1+\lambda\dt)^{-2}$ of them land in intervals with other jumps, whose aggregate losses appear, nonnegative, in the last term of the upper side. At the calibration of Section~\ref{sec:numerics} this gives $\mathcal{E}(\dt) \in [-0.653,\,+0.849]$~bp/yr at $\dt=12$~s and $[-0.355,\,+0.464]$ at $50$~ms, never exceeding $0.29\%$ of the rate over the deployed range $50$~ms--$12$~s. Evaluating \eqref{eq:rate-stationary} by quadrature puts the realised remainder at $+0.165$ and $+0.052$~bp/yr respectively, inside the bound and of the predicted sign. As $\dt\to0$ the block-clock term vanishes while the interaction tends to $\lambda V\gamma^2/8$, so $\ell(\dt) - \lambda VG_\nu(\gamma)$ is squeezed into $[0,\,0.463]$~bp/yr, at most $0.37\%$ of the floor.
\end{proposition}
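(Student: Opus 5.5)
The plan is to write the exact stationary rate \eqref{eq:rate-stationary} as a telescoping sum of four differences, one for each departure of $\ell_0$ listed in the statement, and to bound each difference separately with its own sign. With $h=\tfrac V8\xi^2+\tfrac V2\rho(\xi)$ and $\xi$ determined by the pre-block mispricing $z_{\tau^-}=M+X+\sum J$, I decompose the inter-block evolution into three parts. The carried post-block state $M$ satisfies $|M|\le\gamma$ pathwise. The diffusion increment $X\sim\mathcal N(0,\sigma^2\tau)$ is taken over an exponential interblock time $\tau$ of mean $\dt$. The compound-Poisson sum of jumps is conditioned on the number $N_\tau$ of jumps in that interval.

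Splitting on $N_\tau\in\{0,1,\ge2\}$ gives
\[
\ell=\tfrac1{\dt}\big(\E[h;N_\tau=0]+\E[h;N_\tau=1]+\E[h;N_\tau\ge2]\big).
\]
The probabilities of $N_\tau=0$ and $N_\tau=1$ under exponential $\tau$ are $(1+\lambda\dt)^{-1}$ and $\lambda\dt(1+\lambda\dt)^{-2}$. This gives the thinned single-jump rate $\lambda(1+\lambda\dt)^{-2}$ and shows where the block-clock and thinning deficits come from.

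\textbf{Curvature.} First I peel off the curvature term. Symmetry of $\nu$ together with driftlessness ($m=0$) makes $z$ and hence $\xi$ symmetric under $\pi$, so \eqref{eq:rhosign} gives $\mathcal E_{\mathrm{curv}}=\tfrac V2\E_\pi[\rho(\xi)]\ge0$. This is the only place the curvature enters, and it contributes to the upper side only.

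\textbf{No-jump intervals.} For the term with $N_\tau=0$, I compare against Lemma~\ref{lem:F}. The factor $(1+\lambda\dt)^{-1}\ge1-\lambda\dt$ yields the block-clock term; the constant $\tfrac52$ should come from also using $(1+\lambda\dt)^{-2}$ and $\eta$-dependence. The remaining error is that the law of $M$ is the true carried law rather than the restriction of \eqref{eq:pi0}. I bound it by a TV estimate. Each block perturbs the $\lambda=0$ kernel $K_0$ by at most $\lambda\dt$ in total variation. Summing the geometric series with contraction $\tfrac12$ every $n_*$ steps gives a TV gap of order $2n_*\lambda\dt$. The mixing hypothesis \eqref{eq:mix} turns this into $O(\lambda\dt(1+\eta^2))$. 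Multiplying by $\tfrac1{\dt}$ and by the size of the test function $\tfrac V8\E(|M+X|-\gamma)_+^2\le\tfrac V8(2\gamma^2+\sigma^2\dt)$-type quantities, measured against $\eta^2=2\gamma^2/(\sigma^2\dt)$, should produce $\tfrac{\lambda V}4(2\gamma^2+\sigma^2\dt)$ on both sides.

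\textbf{Single-jump intervals.} For the term with $N_\tau=1$, I write $g(M+X+J)-g(J)$. The upper side comes from $g(a+b)\le g(b)+$ cross terms with $|a|\le\gamma+|X|$. After averaging over symmetric $J$ the linear cross term vanishes, and what remains is bounded by $\E(M+X)^2\le\gamma^2+\sigma^2\E\tau\cdot2$, giving the interaction term $\tfrac{\lambda V}8(2\sigma^2\dt+\gamma^2)$. The lower side uses convexity of $g$: Jensen over symmetric $J$ gives $\E g(M+X+J)\ge\E g(J)$, so no loss arises there. What remains is the thinning deficit $\lambda[1-(1+\lambda\dt)^{-2}]VG\le2\lambda^2\dt\,VG$.

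\textbf{Multi-jump intervals.} For the term with $N_\tau\ge2$, I use $h\ge0$ for the lower side. For the upper side I use $g(d)\le d^2$ and $\E[\cdot;N_\tau\ge2]$ with $\Prob(N_\tau\ge2)=(\lambda\dt)^2/(1+\lambda\dt)^2$ and $\E[N_\tau^2;N_\tau\ge2]=O((\lambda\dt)^2)$. This gives the multi-jump term.

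\textbf{Main obstacle and numerics.} The hard step is the stationary-law term. It requires a perturbation bound for the stationary law of a Markov chain on the carried state that is uniform in $\dt$, and the claim that the accumulated $O(\lambda\dt)$ per-block drift costs only $\lambda V\gamma^2/2$ depends on \eqref{eq:mix} exactly matching the $(1+\eta^2)$ relaxation scale. I would try the standard bound $\|\pi-\pi_0\|_{\mathrm{TV}}\le 2n_*\sup\|K-K_0\|_{\mathrm{TV}}$ first. The final numerical claims are then plugged in at the calibration of Section~\ref{sec:numerics}, with the realised remainder evaluated by quadrature of \eqref{eq:rate-stationary}.
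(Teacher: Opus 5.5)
Your architecture is the paper's. You split on the jump count $N$ in the inter-block interval and sign the curvature residual by \eqref{eq:rhosign}. You treat $N=1$ by convexity of $g$ (interaction on the upper side, thinning on the lower) and $N\ge2$ by $g(d)\le d^2$. You handle the carried-state law by the perturbation bound $\|\pi-\pi_0\|_{\mathrm{TV}}\le 2n_*\lambda\dt$.

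The gap is in the step you flag yourself. You propose to multiply that TV gap by a size of the per-block loss of order $\tfrac V8(2\gamma^2+\sigma^2\dt)$. With $n_*\le 2(1+\eta^2)$ and $\eta^2=2\gamma^2/(\sigma^2\dt)$, this gives
\[
\tfrac{1}{\dt}\cdot\tfrac V8(2\gamma^2+\sigma^2\dt)\cdot 4\lambda\dt(1+\eta^2) \;=\; \tfrac{\lambda V}{2}(2\gamma^2+\sigma^2\dt)(1+\eta^2).
\]
This contains the piece $2\lambda V\gamma^4/(\sigma^2\dt)$, which diverges like $1/\dt$ as $\dt\to0$. No bookkeeping ``against $\eta^2$'' can turn it into $\tfrac{\lambda V}{4}(2\gamma^2+\sigma^2\dt)$: your test function is too large by a factor of order $\eta^2$.

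The missing idea is to use as test function the per-block diffusion loss as a function of the carried state, $\phi(m)=\tfrac V8\E[(|m+X|-\gamma)_+^2]$, and to control its \emph{oscillation} over the band. That oscillation contains no $\gamma^2$. For $|m|\le\gamma$ one has $(|m+X|-\gamma)_+\le|X|$. Explicitly, $\phi(m)=\tfrac{V}{8\theta^2}\,2e^{-\eta}\cosh(\theta m)$, so $\mathrm{osc}(\phi)=\tfrac{V}{8\theta^2}(1-e^{-\eta})^2\le V\sigma^2\dt/16$. Then
\[
\mathrm{osc}(\phi)\,\|\pi-\pi_0\|_{\mathrm{TV}}/\dt\;\le\;\tfrac{V\sigma^2\dt}{16}\cdot 4\lambda(1+\eta^2)\;=\;\tfrac{\lambda V}{4}(2\gamma^2+\sigma^2\dt).
\]
The $\gamma^2$ in the final term comes entirely from the mixing time, not from the loss. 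That is why the term neither vanishes nor blows up as $\dt\to0$.

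There are also smaller gaps in the block-clock and multi-jump steps.
\begin{itemize}
\item Block clock: $(1+\lambda\dt)^{-1}\ge 1-\lambda\dt$ alone does not produce the $\tfrac52$. You need that conditioning on $N=0$ tilts the interval to an exponential of rate $\dt^{-1}+\lambda$, so the diffusive increment is Laplace with rate $\theta\sqrt{1+u}$, $u=\lambda\dt$. This gives the exact factor $R=\tfrac{1+\eta}{(1+u)^2(1+\eta\sqrt{1+u})}\in[1-\tfrac52u,\,1]$.
\item Size-biasing elsewhere: the same effect, $T\mid N{=}k\sim\mathrm{Gamma}(k{+}1,\dt^{-1}{+}\lambda)$, supplies the factor $2$ in the interaction term and the $3\sigma^2\dt$ in the multi-jump term.
\item Multi-jump moment: with symmetric marks the cross terms between jumps vanish. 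You therefore need only $\E[N;\,N\ge2]$, not $\E[N^2;\,N\ge2]$, and this is what yields the constant $2\delta^2$.
\end{itemize}
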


\deferproof{A.4}

The floor claim holds exactly, it is not subject to $\mathcal{E}$ and Proposition~\ref{prop:error}, and the argument never touches the stationary law --- which is what makes it available, since with $\lambda>0$ that law has no closed form. Two limits should be kept apart here: $\dt\to0$ shortens the interval between blocks, but the jump law is exogenous to the block schedule, so it does not shrink $J$. Fast blocks do not imply small jumps, and the floor exists precisely because of that.

\begin{theorem}[Exact jump floor]\label{thm:floor}
Let $\nu$ be symmetric and satisfy the aggregation hypothesis $\E[g(\sum_1^k J)] \ge k\,\E[g(J)]$ for every $k\ge1$, as Merton with $m=0$ does. Then for every $\dt>0$,
\[
\ell(\dt) \;\ge\; \lambda\,V \cdot G_\nu(\gamma) \;=\; \inf_{\dt'>0}\,\ell_0(\dt') \;>\; 0,
\]
with no approximation and no remainder, the middle equality being immediate from $F>0$ and $F(\kappa)\to0$. The floor that the separated rate \eqref{eq:main} locates is therefore a floor for the exact rate, not an artifact of the separation.
\end{theorem}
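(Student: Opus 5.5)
The plan is to work directly with the exact stationary representation \eqref{eq:rate-stationary}, $\ell(\dt)=\dt^{-1}\E_\pi[h(\xi)]$, and never evaluate $\pi$. First I replace $h$ by its quadratic surrogate. The pre-block mispricing $z_{\tau^-}$ is symmetric in law when $\nu$ is symmetric and $\log S$ is driftless: the dynamics \eqref{eq:ourz} commute with $z\mapsto -z$, so $\pi$ and hence the law of $\xi$ are symmetric. Pairing $\pm\xi$ as in \eqref{eq:rhosign} then gives $\E_\pi[\rho(\xi)]\ge0$, so $\ell(\dt)\ge \frac{V}{8\dt}\E_\pi[\xi^2]=\frac{V}{8\dt}\E_\pi[g(z)]$, where $g(z)=(|z|-\gamma)_+^2$ is convex and even. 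From here on only this convex quadratic bound is needed.

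Next I condition on the carried state. Let $M=z_{\tau_{i-1}}$ be the post-block mispricing, with $|M|\le\gamma$ pathwise. Over the following interblock interval of length $\tau\sim\mathrm{Exp}(1/\dt)$, $z_{\tau_i^-}=M+X+\sum_{k=1}^{N_\tau}J_k$. Here $X=\sigma W_\tau$ is symmetric and the Brownian motion, the Poisson process and $\tau$ are independent of $M$. Conditioning on $M$, $\tau$ and $N_\tau=k$, the variable $X+\sum J_k$ is symmetric. Convexity and evenness of $g$ give $\E[g(a+Y)]\ge g(a)$ for symmetric $Y$ by Jensen, and more usefully $\E[g(a+Y)]\ge \E[g(Y')]$-type comparisons. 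Concretely, I use $\E[g(a+Y)]=\tfrac12\E[g(a+Y)+g(-a+Y)]\ge \E[g(Y)]$: the function $a\mapsto \E g(a+Y)$ is even and convex, so it is minimized at $a=0$. This removes $M$. The same argument with $Y=\sum_1^kJ_j$ and perturbation $X$ removes the diffusion. This yields
\[
\E_\pi[g(z_{\tau^-})]\;\ge\;\sum_{k\ge0}\Prob(N_\tau=k)\,\E\Big[g\Big(\textstyle\sum_1^kJ_j\Big)\Big]\;\ge\;\E[N_\tau]\,\E[g(J)]=\lambda\dt\,\E[g(J)],
\]
using the aggregation hypothesis for the second inequality and $\E[N_\tau]=\lambda\E[\tau]=\lambda\dt$. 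Dividing by $\dt$ gives $\ell(\dt)\ge \tfrac{\lambda V}{8}\E[g(J)]=\lambda VG_\nu(\gamma)$.

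The middle equality is immediate from \eqref{eq:main}, since $F>0$ and $F(\kappa)\to0$ as $\dt'\to0$. Strict positivity follows because $\nu\ne\delta_0$ is symmetric and so charges $|J|>\gamma$ for small enough fee. Strictly, it needs $\Prob(|J|>\gamma)>0$, which holds for Merton, as noted after Lemma~\ref{lem:G}. For Merton with $m=0$, the aggregation hypothesis is checked from \eqref{eq:Gclosed}: $\sum_1^kJ\sim\mathcal N(0,k\delta^2)$, and $k\delta^2\Psi(\gamma/(\sqrt k\delta))\ge k\delta^2\Psi(\gamma/\delta)$ because $\Psi$ is decreasing.

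The main obstacle is the symmetry of the pre-block stationary law when $\lambda>0$, together with the legitimacy of conditioning on $M$ independently of the next interval. The latter is fine because the post-block state is a function of the past and the future increments are independent (strong Markov at a block time). Existence and uniqueness of $\pi$ is assumed as in \eqref{eq:rate-stationary}. For symmetry I would argue that the reflected chain $M_i\mapsto \bound(M_i+\text{increment},-\gamma,\gamma)$ has a kernel equivariant under negation, so the unique invariant law is symmetric.
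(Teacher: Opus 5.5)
Your proof is correct and follows the paper's own chain: the curvature residual is dropped using symmetry of $\pi$ and \eqref{eq:rhosign}, the carried state $M$ and the diffusive increment $X$ are dropped using convexity of $g$, and the aggregation hypothesis together with $\E[N]=\lambda\dt$ finishes the argument. The differences are minor. You remove $M$ through evenness and convexity of $a\mapsto\E[g(a+Y)]$ for the symmetric increment $Y$, whereas the paper uses conditional Jensen with $\E[M]=0$. You also rightly note that the strict inequality $>0$ requires $\Prob(|J|>\gamma)>0$, which the statement leaves implicit for general symmetric $\nu$.
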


\deferproof{A.5}

\begin{corollary}[Approach to the floor]\label{cor:floor}
The separated rate $\dt \mapsto \ell_0(\dt)$ is strictly increasing on $(0,\infty)$, with
\[
\lim_{\dt\to 0} \ell_0(\dt) = \lambda V \cdot G(\gamma; m, \delta^2), \qquad
\lim_{\dt\to\infty} \ell_0(\dt) = \tfrac{\sigma^2}{8}V + \lambda V \cdot G(\gamma; m, \delta^2),
\]
and the floor is approached at the algebraic rate $\ell_0(\dt) - \lambda VG \sim \sigma^3V\sqrt{\dt}/(8\sqrt2\,\gamma)$. For the exact rate, $\lambda VG \le \liminf_{\dt\to0}\ell(\dt) \le \limsup_{\dt\to0}\ell(\dt) \le \lambda VG + \lambda V\gamma^2/8 + \lambda V\gamma^2/2 + \mathcal{E}_{\mathrm{curv}}$, the lower bound by Theorem~\ref{thm:floor} and the upper by Proposition~\ref{prop:error}; the two differ by $0.37\%$ of the floor at the calibration of Section~\ref{sec:numerics}. Monotonicity likewise transfers to $\ell$, under \eqref{eq:mix}, wherever the separated increment clears the width of the remainder envelope: $\ell(\dt_2)>\ell(\dt_1)$ for any $\dt_1<\dt_2$ with $\ell_0(\dt_2)-\ell_0(\dt_1) > 1.50$~bp/yr, which holds for every pair reported in Section~\ref{sec:numerics}, and quadrature of \eqref{eq:rate-stationary} confirms monotonicity of $\ell$ itself across that range.
\end{corollary}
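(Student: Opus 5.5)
The statement splits into a part about the separated rate $\ell_0$, which is elementary calculus on \eqref{eq:main}, and a part about the exact rate $\ell$, which transfers that calculus through Theorem~\ref{thm:floor} and Proposition~\ref{prop:error}. I would handle them in that order.

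For $\ell_0$, I write $\ell_0(\dt)=\tfrac{\sigma^2}{8}V\,F(\kappa(\dt))+\lambda VG$ with $\kappa(\dt)=\gamma/(\sigma\sqrt{\dt})$. The jump term carries no $\dt$. Since $\kappa$ is strictly decreasing in $\dt$ for $\gamma>0$ and $F'<0$ by Lemma~\ref{lem:F}, the chain rule makes $\ell_0$ strictly increasing. The limits then follow from $\kappa\to\infty$ as $\dt\to0$ (so $F\to0$) and $\kappa\to0$ as $\dt\to\infty$ (so $F\to1$). For the rate of approach I use the explicit form $F(\kappa)=1/(1+\sqrt2\kappa)\sim 1/(\sqrt2\kappa)=\sigma\sqrt{\dt}/(\sqrt2\gamma)$. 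Multiplying by $\sigma^2V/8$ gives $\ell_0-\lambda VG\sim\sigma^3V\sqrt{\dt}/(8\sqrt2\gamma)$. These steps implicitly need $\gamma>0$; at $\gamma=0$ the function $\ell_0$ is constant, and I would note that case separately.

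For $\ell$, the lower bound $\liminf\ell\ge\lambda VG$ is immediate from Theorem~\ref{thm:floor}, which holds at every $\dt$. For the upper bound I take $\limsup_{\dt\to0}$ in $\ell=\ell_0+\mathcal{E}$ using the upper envelope of Proposition~\ref{prop:error}, term by term:
\begin{itemize}
\item the interaction term tends to $\lambda V\gamma^2/8$;
\item the stationary-law term tends to $\lambda V\gamma^2/2$;
\item the multi-jump term is $O(\dt)$ and vanishes;
\item $\ell_0\to\lambda VG$.
\end{itemize}
The curvature term $\mathcal{E}_{\mathrm{curv}}$ is bounded uniformly as $\dt\to0$ by the closed form of Remark~\ref{rem:curvature}, so I keep it as an additive term. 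The "$0.37\%$" figure is a numerical evaluation at the calibration and is quoted, not proved.

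For monotonicity transfer I take $\dt_1<\dt_2$ and write
\[
\ell(\dt_2)-\ell(\dt_1)\ \ge\ \ell_0(\dt_2)-\ell_0(\dt_1)+\inf\mathcal{E}(\dt_2)-\sup\mathcal{E}(\dt_1).
\]
I then bound $\sup_{\dt}\big(\overline{\mathcal{E}}(\dt)-\underline{\mathcal{E}}(\dt')\big)$ over the deployed range by the envelope width, at most about $0.849+0.653\approx1.50$ bp/yr at the calibration. Whenever the separated increment exceeds this, the difference is strictly positive.

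The main obstacle is this step. It requires the envelope bounds to be monotone enough in $\dt$ that their maximum over the range is attained at the endpoints, or else a direct numerical sup over the range. I would verify it by evaluating both sides of the Proposition~\ref{prop:error} envelope across the calibrated grid. The quadrature confirmation of monotonicity of $\ell$ is numerical and cited as such.
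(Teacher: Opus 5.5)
Your proposal is correct and follows essentially the same route as the paper's proof. The shared steps are: monotonicity via the chain rule through $\kappa(\dt)$ with $F'<0$; the limits and the $\sqrt{\dt}$ rate from $F(\kappa)=1/(1+\sqrt2\kappa)\sim1/(\sqrt2\kappa)$; Theorem~\ref{thm:floor} and the $\dt\to0$ limit of the envelope of Proposition~\ref{prop:error} for the $\liminf$/$\limsup$ bounds; and $\ell(\dt_2)-\ell(\dt_1)\ge\ell_0(\dt_2)-\ell_0(\dt_1)-w$ with $w=0.849+0.653$~bp/yr for the transfer. You add two cautions the paper leaves implicit: everything requires $\gamma>0$, and reading $w$ off the $12$~s values presumes the envelope is widest at that end of the range. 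Neither is a gap in your argument.
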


\deferproof{A.6}

\begin{remark}[The floor as un-recoverable LVR]\label{rem:feerecovery}
In the accounting of~\cite[Thms.~3--4]{milionis2023automated}, arbitrage profit and fee income sum to the frictionless LVR of Theorem~\ref{thm:decomp}: fees \emph{split} LVR rather than reduce it, and $\ell(\dt)$ is the arbitrage half, LP loss net of the fees LPs receive back. On the diffusion channel that split is $F(\kappa):1-F(\kappa)$ with $F(\kappa)\to0$, so shorter blocks return diffusion-LVR to LPs as fee income. On the jump channel \eqref{eq:Gclosed} makes it $\Psi(\gamma/\delta):1-\Psi(\gamma/\delta)$, a function of $\gamma/\delta$ alone with no $\dt$ in it, because a jump exceeding the band is cleared at the next block whatever the block rate. Fees recover only $1-\Psi(\gamma/\delta)$ of jump-LVR---$4.1\%$ at the calibration below---and the rest is the floor. The lever against it is $\gamma/\delta$, i.e.\ fee-tier design, not block time.
\end{remark}

\section{The Planner's Block-Time Optimum}\label{sec:opt}

Shorter blocks impose real costs that must be funded somehow. Each block requires fixed per-block work---block gossip and propagation~\cite{decker2013information}, BLS signature aggregation over committee attestations~\cite{buterin2022paths}, and view-change/commit overhead in committee-based consensus~\cite{benhaim2024scaling}---so the true \emph{validator operating cost} is a resource cost of hardware, bandwidth, and energy that scales with the number of blocks produced. Since this resource cost is not directly observable, I rely on the chain's native-token issuance as a proxy for it. Because issuance is a chosen compensation policy rather than a measured cost, it overstates the resource cost where it acts as a growth subsidy and understates it where fees and MEV fund validators~\cite{buterin2022paths,beccuti2025staking}. This bias only shifts the level of the cost, an effect discussed in Section~\ref{sec:disc}. I thus adopt the simplest specification with fixed per-block cost rate $R(\dt) = r/\dt$, calibrating the per-block chain-level cost $r$ from observed issuance.

A representative pool with TVL $V$ inherits a share of the chain's marginal resource cost in proportion to value secured: $c(\dt) = R(\dt) \cdot V/V_{\text{chain}} = c/\dt$ where $c := r \cdot V/V_{\text{chain}}$ and $V_{\text{chain}}$ is the chain's native-token market capitalization. The LP-side planner chooses $\dt$ to minimize the total per-unit-time cost rate combining the LVR rate from Theorem~\ref{thm:fee} with the per-pool consensus cost,
$W(\dt) = \ell_0(\dt) + c/\dt$,
written against the separated rate $\ell_0$ of \eqref{eq:main}. Theorem~\ref{thm:planner} is therefore a statement about $\ell_0$; Remark~\ref{rem:exactfoc} quantifies what the remainder does to it.

\begin{theorem}[Optimal block time]\label{thm:planner}
Let $c > 0$. Any interior critical point $\dt^{\mathrm{opt}}$ of $W(\dt)$ satisfies
$\ell_0'(\dt^{\mathrm{opt}}) = c/(\dt^{\mathrm{opt}})^2$.
With $\kappa = \gamma/(\sigma\sqrt{\dt})$, $d\kappa/d\dt = -\kappa/(2\dt)$, and the derivative in Lemma~\ref{lem:F},
\begin{equation}\label{eq:focexplicit}
\ell_0'(\dt) \;=\; \tfrac{\sigma^2 V}{8}\,F'(\kappa)\cdot\Big(\!-\tfrac{\kappa}{2\dt}\Big)
\;=\; \frac{\sigma^2 V}{8}\cdot\frac{\sqrt2\,\kappa}{2\dt\,(1+\sqrt2 \kappa)^2} \;>\; 0 .
\end{equation}
The first-order condition, after substituting $\dt = \gamma^2 / (\sigma^2 \kappa^2)$ and writing $\eta := \sqrt2\,\kappa$, reduces to a cubic:
\begin{equation}\label{eq:planner}
\boxed{\;\eta^{\mathrm{opt}}\big(1+\eta^{\mathrm{opt}}\big)^{2} \;=\; \frac{V\gamma^2}{8c},
\qquad
\dt^{\mathrm{opt}} \;=\; \frac{2\gamma^2}{\sigma^2\,(\eta^{\mathrm{opt}})^2}.\;}
\end{equation}
Since $\eta\mapsto\eta(1+\eta)^2$ is a strictly increasing bijection of $(0,\infty)$ onto itself, \eqref{eq:planner} admits a unique positive solution for every $c>0$, and $W$ attains its minimum there. The root is available in closed form by Cardano's formula: with $A := V\gamma^2/(8c)$, $B := A/2 + 1/27$ and $u := B + \sqrt{B^2 - 1/729}$,
$\eta^{\mathrm{opt}} = u^{1/3} + \tfrac{1}{9}u^{-1/3} - \tfrac{2}{3}$.
\end{theorem}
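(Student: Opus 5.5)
The plan is to work directly with $W(\dt)=\ell_0(\dt)+c/\dt$ and to use Theorem~\ref{thm:fee} only through its definition of $\ell_0$. The jump term $\lambda V G(\gamma;m,\delta^2)$ carries no $\dt$, so it drops out of $W'$. This is where the invariance in $(\lambda,m,\delta)$ comes from. Hence $W'(\dt)=\tfrac{\sigma^2V}{8}\,\tfrac{d}{d\dt}F(\kappa(\dt))-c/\dt^2$, and setting this to zero gives the stated first-order condition $\ell_0'(\dt^{\mathrm{opt}})=c/(\dt^{\mathrm{opt}})^2$. The chain rule with $\kappa=\gamma\sigma^{-1}\dt^{-1/2}$ gives $d\kappa/d\dt=-\kappa/(2\dt)$. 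Substituting $F'$ from Lemma~\ref{lem:F} then yields \eqref{eq:focexplicit}, whose sign is positive because both factors $F'$ and $d\kappa/d\dt$ are negative.

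Next I will reduce the first-order condition to the cubic. Multiplying $\frac{\sigma^2V}{8}\cdot\frac{\sqrt2\kappa}{2\dt(1+\sqrt2\kappa)^2}=c/\dt^2$ through by $\dt^2$ gives $\frac{\sigma^2V\dt}{8}\cdot\frac{\eta}{2(1+\eta)^2}=c$ with $\eta=\sqrt2\kappa$. Since $\kappa^2=\gamma^2/(\sigma^2\dt)$, we have $\eta^2=2\gamma^2/(\sigma^2\dt)$, i.e.\ $\sigma^2\dt=2\gamma^2/\eta^2$. Substituting, the left side becomes $\frac{V\gamma^2}{8c}\cdot\frac{1}{\eta(1+\eta)^2}$ times $c$, and rearranging gives $\eta(1+\eta)^2=V\gamma^2/(8c)$ together with the stated inversion for $\dt^{\mathrm{opt}}$. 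Because $\dt\mapsto\eta$ is a strictly decreasing bijection of $(0,\infty)$, critical points of $W$ in $\dt$ correspond one-to-one with roots in $\eta$. The map $\eta\mapsto\eta(1+\eta)^2$ has positive derivative $(1+\eta)(1+3\eta)$, vanishes at $0$, and is unbounded, so the root is unique for every $c>0$.

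The step I expect to need the most care is showing that this critical point is a global minimum rather than merely unique. I would do this by a sign argument instead of a second derivative. Write $\dt^2W'(\dt)=c\,[\Phi(\eta)/A-1]$, where $\Phi(\eta):=A/(\eta(1+\eta)^2)$, which follows from the same algebra. Since $\Phi$ is strictly decreasing in $\eta$ and $\eta$ is strictly decreasing in $\dt$, the bracket is strictly increasing in $\dt$. So $W'<0$ before the root and $W'>0$ after it. Equivalently, as $\dt\to0$ the term $-c/\dt^2$ dominates because $\ell_0'=O(\dt^{-1/2})$, and as $\dt\to\infty$ we have $\ell_0'\sim\frac{\sigma^2V}{16\dt}$, which dominates $c/\dt^2$. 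Thus $W$ decreases and then increases, and the unique critical point is the global minimizer on $(0,\infty)$.

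Finally, for Cardano's formula, I will put $\eta=t-\tfrac23$ in $\eta^3+2\eta^2+\eta-A=0$. This gives the depressed cubic $t^3-\tfrac13t-(A+\tfrac{2}{27})=0$, so $p=-\tfrac13$ and $q=-(A+\tfrac2{27})=-2B$. The discriminant $B^2-1/729>0$ holds for $A>0$, so there is a single real root $t=u^{1/3}+\tfrac{1}{9}u^{-1/3}$, using $(u\bar u)^{1/3}=1/9$ for the conjugate $\bar u=B-\sqrt{B^2-1/729}$. This gives the displayed expression for $\eta^{\mathrm{opt}}$.
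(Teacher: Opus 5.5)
Your proposal is correct and follows the paper's proof essentially step for step: both reduce $\dt^2W'(\dt)$ to $\frac{V\gamma^2}{8\eta(1+\eta)^2}-c$, use that this is strictly increasing in $\dt$ to get a single sign change of $W'$ from negative to positive (hence a unique global minimizer), and obtain the closed form from the shift $\eta=t-\tfrac23$ and Cardano. There are two slips, and neither affects the conclusion. First, with your $\Phi(\eta):=A/(\eta(1+\eta)^2)$ the identity is $\dt^2W'(\dt)=c\,[\Phi(\eta)-1]$, not $c\,[\Phi(\eta)/A-1]$. Second, as $\dt\to\infty$ one has $\ell_0'(\dt)\sim\frac{\sigma^2V}{16\dt}\,\eta\propto\dt^{-3/2}$ rather than $\frac{\sigma^2V}{16\dt}$, which still dominates $c/\dt^2$.
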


\deferproof{A.7}

\begin{remark}\label{rem:cs}
Differentiating \eqref{eq:planner} yields four sensitivities. First, $\partial \dt^{\mathrm{opt}}/\partial \lambda = \partial \dt^{\mathrm{opt}}/\partial m = \partial \dt^{\mathrm{opt}}/\partial \delta = 0$: the jump parameters do not enter the first-order condition, shifting the level of $W$ by $\lambda V G$ but leaving the marginal tradeoff untouched. This is the paper's main qualitative message. It is exact for $\ell_0$ and holds up to an explicitly bounded shift for $\ell$ (Remark~\ref{rem:exactfoc}). Second, $\partial \dt^{\mathrm{opt}}/\partial c > 0$: more expensive consensus lengthens the optimum, the planner amortizing the per-block cost over longer intervals and accepting more LVR per block. Third, $\partial \dt^{\mathrm{opt}}/\partial \gamma$ is U-shaped, since at very low $\gamma$ the multiplier $F$ stays close to $1$ and block-shortening cannot suppress diffusion-LVR, while at very high $\gamma$ it is already crushed and shortening offers no further benefit; the interior minimum sits where shortening is an effective lever. Fourth, $\partial \dt^{\mathrm{opt}}/\partial V = 0$: since $c = r\,V/V_{\text{chain}}$ is proportional to $V$, the right-hand side of \eqref{eq:planner} equals $\gamma^2 V_{\text{chain}}/(8r)$ and $V$ cancels, so the optimum is a property of the chain and the fee tier, not of the pool. This one is exact: $V$ cancels from \eqref{eq:planner} identically, with no appeal to the separation.
\end{remark}

\begin{remark}[What the remainder does to the optimum]\label{rem:exactfoc}
The invariance in $(\lambda,m,\delta)$ is a property of $\ell_0$, whose jump term is an additive constant in $\dt$. In the exact model $\mathcal{E}$ carries both $\lambda$ and $\dt$, so the exact first-order condition is not jump-free and the invariance is not a theorem about $\ell$. Its failure is small enough to bound directly. Under \eqref{eq:mix}, any minimizer of $\ell+c/\dt$ lies in $\{\dt: W(\dt)+\mathcal{E}^-(\dt) \le \min_s [W(s)+\mathcal{E}^+(s)]\}$ with $\mathcal{E}^\pm$ the envelope of Proposition~\ref{prop:error}; at the calibration of Section~\ref{sec:numerics} this set is $[7.3,\,9.9]$~s around the separated optimum $\dt^{\mathrm{opt}}=8.4$~s. So the remainder can move the optimum by about a second --- which is also the range over which $W$ sits within $1$~bp/yr of its minimum anyway, so the shift is not resolvable against the flatness of the objective. The honest statement is therefore quantitative rather than absolute: jump parameters move $\dt^{\mathrm{opt}}$ by less than the flatness of the objective can resolve.
\end{remark}

\section{Numerical Illustration}\label{sec:numerics}

I illustrate the qualitative structure of Theorems~\ref{thm:fee} and \ref{thm:planner} for a hypothetical full-range LP on an ETH/USDT Uniswap pool. Model parameters are summarized in Table~\ref{tab:params}.

\begin{table}[htbp]
\centering
\caption{Baseline model parameters for the ETH/USDT full-range LP illustration}\label{tab:params}
\begin{tabular}{lc}
\toprule
Parameter & Value \\
\midrule
$\sigma$ (diffusion volatility) & $0.8156$ \\
$\lambda$ (jump intensity) & $283.3$/yr \\
$m$ (mean log jump) & $0$ \\
$\delta$ (std of log jump) & $0.0192$ \\
$\gamma$ (swap fee) & $0.0005$ \\
$V$ (pool TVL) & $1{,}000{,}000$ USD \\
$c$ (consensus cost) & $260 \times 10^{-5}$ USD/block \\
\bottomrule
\end{tabular}
\end{table}

The price parameters $(\sigma,\lambda,\delta)$ are estimated from Binance ETH/USDT $5$-minute closes over January 2020--June 2026, $2{,}373$ days. I take $\sigma$ from the continuous part of realised variance, estimated by bipower variation, and obtain $\lambda$ and $\delta$ from the jump-detection test of Lee and Mykland at the $1\%$ level, correcting the detected jump sizes for their diffusive component. The per-pool consensus cost $c$ follows from Ethereum's gross issuance ($\sim\!\$1.7$B/yr at $V_{\text{ETH}} \approx \$208$B), scaled by $25\%/30\% \approx 0.83$ to reflect that current staking ($\sim\!30\%$ of supply) exceeds Drake's security-optimal $1/4$~\cite{drake2025croissant,elowsson2024endgame}, then attributed pro-rata by $V/V_{\text{chain}} = 4.8 \times 10^{-6}$, yielding $c \approx \$260 \times 10^{-5}$/block. The remaining parameters are modeling choices: $m = 0$ (symmetric jumps --- the sample gives $\hat m = -0.0012$ with centred Lee--Mykland local-volatility window; the $99\%$ interval from the asymptotic standard error of the mean detected jump, $[-0.0026, +0.0003]$, contains zero, consistent with the choice), $\gamma = 5$~bps (Uniswap $0.05\%$ tier), $V = \$1$M (pool TVL). Sensitivities below cover $\sigma \in \{0.30, 0.816, 1.50\}$, $\lambda \in \{71, 283, 1133, 4533\}$/yr, $c \in \{2, 30, 90, 260, 500\} \times 10^{-5}$.

\paragraph{Closed-Form Magnitudes (Theorem~\ref{thm:fee}).}
Two reference numbers anchor this section: the diffusion ceiling $\sigma^2/8 = 832$~bp/yr (as $\dt \to \infty$) and the jump floor $\lambda G = 125$~bp/yr (as $\dt \to 0$). The fee has a negligible effect on the jump term ($\Psi(0.0260) = 0.959$, a $4.1\%$ trim) since $\gamma/\delta = 0.0260$ lies deep inside the support of $J$. On the diffusion term the fee is the dominant suppression mechanism at short $\dt$.

\paragraph{Chain-by-Chain LVR Rate.}
Evaluating \eqref{eq:main} at the baseline $(\sigma,\lambda,\delta,\gamma)$ over a range of $\dt$ from Ethereum L1 down to a sub-second app-chain target, with $\kappa(\dt) = \gamma/(\sigma\sqrt{\dt})$ and the diffusion share $(\sigma^2/8)F(\kappa)/\ell$ reported alongside, gives Table~\ref{tab:chains}. The $\sqrt{\dt}$ rate of Corollary~\ref{cor:floor} is visible: a $240\times$ cut from $12$~s to $50$~ms removes only $89\%$ of the diffusion term, leaving the rate $29\%$ above the floor. The two components are equal at $\dt_\times = 0.75$~s, and the diffusion term reaches a tenth of the floor only at $\dt = 5.5$~ms.

\begin{table}[htbp]
\centering
\caption{LVR rate $\ell$ at the baseline parameters across deployed block times, with the diffusion share $(\sigma^2/8)F(\kappa)/\ell$ alongside. Chain labels denote block times only; every row is the same baseline pool.}\label{tab:chains}
\begin{tabular}{lrrr}
\toprule
Chain & $\dt$ & $\ell$ (bp/yr) & diffusion share \\
\midrule
Ethereum L1 & $12$ s & $471$ & $73\%$ \\
Base / OP L2 & $2$ s & $312$ & $60\%$ \\
Solana & $400$ ms & $221$ & $43\%$ \\
Arbitrum & $250$ ms & $203$ & $38\%$ \\
App-chain & $50$ ms & $162$ & $23\%$ \\
\midrule
Jump floor ($\dt\to 0$) & --- & $125$ & $0\%$ \\
\bottomrule
\end{tabular}
\end{table}

\paragraph{Regime Sensitivity at Ethereum's $12$ s.}
Holding $\dt = 12$~s fixed, Table~\ref{tab:grid} reports $\ell$ across a $(\sigma,\lambda)$ grid, with the diffusion share of $\ell$ in brackets. The share is what determines whether shortening blocks reduces overall LVR: only the diffusion channel responds to $\dt$, so a high share means the pool has much to gain and a low one that the gain is small. The two readings move in opposite directions across the grid. The level of $\ell$ rises with both parameters, spanning $55$ to $3{,}597$~bp/yr, a factor of $65$; the diffusion share rises with $\sigma$ but falls with $\lambda$, from $1\%$ in the calm, jump-heavy corner to $98\%$ in the stressed, jump-light one. At the baseline $\lambda=283$/yr the share crosses $50\%$ at $\sigma = 0.55$, so the calibrated pool is firmly diffusion-dominated at $73\%$ and stress conditions reinforce this, while a calm regime puts the slot where shorter blocks help little. Any policy claim about $12$-second blocks is therefore conditional on the volatility-and-jump-intensity regime, not on a single point estimate.

\begin{table}[htbp]
\centering
\caption{LVR rate $\ell$ in bp/yr at $\dt = 12$~s across the $(\sigma,\lambda)$ grid, with the diffusion share of $\ell$ in brackets}\label{tab:grid}
\begin{tabular}{lrrrr}
\toprule
$\sigma \,\backslash\, \lambda$ & $71$/yr & $283$/yr & $1{,}133$/yr & $4{,}533$/yr \\
\midrule
$0.30$ (calm)      & $55$ $(43\%)$      & $148$ $(16\%)$     & $524$ $(4\%)$      & $2{,}027$ $(1\%)$ \\
$0.816$ (baseline) & $377$ $(92\%)$     & $471$ $(73\%)$     & $847$ $(41\%)$     & $2{,}349$ $(15\%)$ \\
$1.50$ (stress)    & $1{,}625$ $(98\%)$ & $1{,}719$ $(93\%)$ & $2{,}095$ $(76\%)$ & $3{,}597$ $(44\%)$ \\
\bottomrule
\end{tabular}
\end{table}

\paragraph{Planner's Optimum.}
With $c$ calibrated to the issuance-derived baseline $260 \times 10^{-5}$ USD/block per \$1M TVL, \eqref{eq:planner} gives $A = V\gamma^2/(8c) = 12.02$ and $\eta^{\mathrm{opt}} = 1.677$. Across the sensitivity range, $c = (500, 260, 90, 30, 2)\times10^{-5}$ USD/block yields $\dt^{\mathrm{opt}} = (15.4,\, 8.4,\, 3.4,\, 1.4,\, 0.20)$~s respectively --- nearly two orders of magnitude, driven by the cost calibration. Volatility moves it comparably: $\sigma \in \{0.30, 0.815, 1.50\}$ gives $\dt^{\mathrm{opt}} = (62.3,\, 8.4,\, 2.5)$~s, since $\dt^{\mathrm{opt}} \propto \gamma^2/\sigma^2$ at fixed $\eta^{\mathrm{opt}}$. At the baseline $c = 260 \times 10^{-5}$, $\dt^{\mathrm{opt}} \approx 8.4$~s or roughly two-thirds of Ethereum's block-time and well above sub-second L2/app-chain timing. Under an LP-LVR-only objective Ethereum is too slow, though the gain is modest: $W$ falls from $539$ to $533$~bp/yr, a saving of $6$~bp/yr, under $2\%$, and stays within $1$~bp/yr of its minimum over $\dt \in [7.4, 9.7]$~s. What moves the optimum is $c$, $\sigma$ and $\gamma$; the jump parameters do not move it at all. The conclusion is also one-sided: a full social-welfare treatment with MEV, finality, and oracle freshness could push $\dt^{\mathrm{opt}}$ back up.

\section{Discussion and Conclusion}\label{sec:disc}

I extend the LVR framework to an empirical feature of crypto reference prices that the block-time literature has so far left out: price discontinuities. Jumps turn out to change what block time can achieve, not what it should be. For the CPMM under jump-diffusion prices the LVR rate splits into a diffusion channel that the block schedule governs through the fee multiplier $F(\kappa)$ and a $\dt$-invariant jump channel that it cannot touch. The planner's optimum solves an explicit cubic and is invariant in pool size exactly and in the jump parameters $(\lambda,m,\delta)$ up to a shift the remainder bound confines to about a second (Remark~\ref{rem:exactfoc}), so jumps shift the level of LVR but not, to any resolution the objective can support, the marginal tradeoff that sets $\dt^{\mathrm{opt}}$.

\paragraph{Policy.}
Shortening blocks can at most close the gap between the current rate and the jump floor $\lambda VG$ of Theorem~\ref{thm:floor}: the floor bounds the rate from below, so no block-time reduction saves more than $\ell(\dt) - \lambda VG$, and because $F$ decays algebraically rather than with a Gaussian tail, that gap closes only as $\sqrt{\dt}$ (Corollary~\ref{cor:floor}). At the calibration of Section~\ref{sec:numerics} the diffusion channel does not become negligible until the low-millisecond range, so at every deployed block time in Table~\ref{tab:chains} what limits block-time policy is not the jump floor but the cost of consensus. Once the floor does bind, block-time refinement no longer helps, and Remark~\ref{rem:feerecovery} identifies the lever that remains: since fees recover only the fraction $1-\Psi(\gamma/\delta)$ of jump-LVR, the binding ratio is $\gamma/\delta$. The LP-side levers are therefore fee-tier design, batch auctions that eliminate the per-block arbitrage window at source, and LVR-rebate designs such as auction-managed AMMs that route pool-management revenue back to LPs.

Calibrating the consensus cost from Ethereum's gross issuance places the LP-side planner's optimum below the current $12$-second block-time but well above the sub-second regime. Two caveats temper the reading: the gain over the status quo is under $2\%$ of the objective, and $\dt^{\mathrm{opt}}$ is sensitive to the cost calibration---invariant in pool size and effectively in every jump parameter (Remarks~\ref{rem:cs}--\ref{rem:exactfoc}), it moves by nearly two orders of magnitude across the plausible range of $c$. The qualitative takeaway is that Ethereum's block-time is somewhat too long under an LP-LVR-only objective, while the aggressive sub-second targets pursued by some chains overshoot what LVR considerations alone would justify.

\paragraph{Limitations.}
The welfare objective is LP-LVR-only, so the optimum from $W(\dt) = \ell_0(\dt) + c/\dt$ is an LP-side input to a larger welfare problem, not a prescription: a full treatment must also weigh MEV redistribution, finality, censorship resistance, and oracle freshness. The consensus cost attributes issuance pro-rata by value secured, abstracting from the dilution a native-token-holding LP bears directly, and equates observed issuance with validator operating cost --- which breaks where issuance is a growth subsidy or validators are funded by MEV and fees. The pool is full-range, whereas most LVR is paid on concentrated liquidity, where the two effects on $G$ run in opposite directions: concentration raises the pool curvature per unit of TVL and so inflates $G$ for jumps contained in the active range, while a jump large enough to cross the range leaves the position single-asset and truncates the loss. Which dominates depends on $\delta$ relative to the range width, so the full-range floor is not simply a lower bound for concentrated positions. The arbitrageur has unlimited capital, zero latency and continuous sight of $S_t$, an idealization weakest in the sub-second regime the floor argument concerns. Blocks finally arrive as a Poisson process rather than on a deterministic proof-of-stake schedule, a tractability assumption inherited from~\cite{milionis2023automated} that is conservative, since the deterministic schedule minimizes arbitrage profit~\cite{nezlobin2025lvr}.

The practical message is that the jump floor is genuine --- for symmetric jump laws Theorem~\ref{thm:floor} establishes it as an exact lower bound, not an artifact of the separation --- but distant: at deployed block times the binding consideration is what a block costs, and the lever against the floor itself is the ratio of fee to jump size, not the block schedule.

\bibliographystyle{splncs04}
\bibliography{references}

\ifextended
\appendix
\section{Omitted Proofs}\label{app:proofs}

This appendix collects the proofs deferred from the main text. Statements are
those of the corresponding numbered results above.

\subsection{Proof of Theorem~\ref{thm:decomp}}\label{app:decomp}

\begin{proof}
The plan: apply the It\^o--L\'evy formula to $V(S_t)$, identify \eqref{eq:lvrdef} as the sum of a quadratic-variation residual and a jump-concavity residual, both nonnegative, then evaluate each for the CPMM.

Write $V_t := V(S_t)$. The It\^o--L\'evy formula for jump-diffusion processes \cite[Prop.~8.14]{cont2003financial} applied to $V$, with $S$ satisfying \eqref{eq:merton}, gives
\begin{align*}
V(S_t) - V(S_0) =\;& \int_0^t V'(S_{s-})\,dS_s \\
&+ \tfrac{1}{2}\int_0^t V''(S_{s-})\,\sigma^2 S_{s-}^2 \,ds \\
&+ \sum_{0 < s \leq t} \!\Big[V(S_s) - V(S_{s-}) - V'(S_{s-})(S_s - S_{s-})\Big].
\end{align*}
Substituting into \eqref{eq:lvrdef},
\begin{equation}\label{eq:lvr-itl}
\mathrm{LVR}(t) \;=\; -\tfrac{1}{2}\int_0^t V''(S_{s-})\,\sigma^2 S_{s-}^2 \, ds \;-\; \sum_{0 < s \leq t}\!\Big[V(S_s) - V(S_{s-}) - V'(S_{s-})\,\Delta S_s\Big].
\end{equation}
Both terms are nonnegative because $V$ is concave: $V''<0$ and $V(S_s) \leq V(S_{s-}) + V'(S_{s-})\Delta S_s$.

\emph{Diffusion term.} With $V''(P) = -x(P)/(2P)$ and $V(P) = 2\sqrt{LP}$, $-\tfrac{1}{2} V''(P) \sigma^2 P^2 = \tfrac{\sigma^2}{4} P\,x(P) = \tfrac{\sigma^2}{4} \sqrt{L P} = \tfrac{\sigma^2}{8} V(P)$.

\emph{Jump term.} At a jump time $\tau$ with $S_\tau = S_{\tau-} e^{J}$, writing $r := e^J$,
\begin{align*}
V(S_\tau) - V(S_{\tau-}) - V'(S_{\tau-})\,\Delta S_\tau
&= 2\sqrt{L S_{\tau-}}\Big[\sqrt{r} - 1 - \tfrac{r-1}{2}\Big] \\
&= -\sqrt{L S_{\tau-}}\,(\sqrt{r}-1)^2 \;=\; -\tfrac{1}{2}\,V(S_{\tau-})\,(e^{J/2} - 1)^2.
\end{align*}
The jump contribution to \eqref{eq:lvr-itl} is therefore $\sum_{\tau} \tfrac{1}{2}\,V(S_{\tau-})\,(e^{J_\tau/2} - 1)^2$. By the marked-Poisson compensator formula \cite[Prop.~8.8]{cont2003financial}, and using the independence of jump marks $J$ from the predictable filtration $\mathcal{F}_{\tau-}$,
\[
\E\!\Big[\sum_{\tau \leq t} \tfrac{1}{2}V(S_{\tau-})(e^{J_\tau/2}-1)^2\Big] \;=\; \tfrac{\lambda}{2}\,\E[(e^{J/2}-1)^2]\cdot\E\!\Big[\int_0^t V(S_{s-})\,ds\Big],
\]
so the expected rate at state $V_t$ is $\tfrac{\lambda}{2} V_t \E[(e^{J/2}-1)^2]$.

\emph{Moments.} Expanding $(e^{J/2}-1)^2 = e^{J} - 2 e^{J/2} + 1$ and taking expectations under $J \sim \mathcal{N}(m,\delta^2)$ gives the bracket in Theorem~\ref{thm:decomp}, nonnegative as the expectation of a square and vanishing only in the degenerate case $J\equiv 0$.
\end{proof}

\subsection{Proof of Lemma~\ref{lem:F}}\label{app:F}

\begin{proof}
Only the tails of \eqref{eq:pi0} contribute, since $(|z|-\gamma)_+ = 0$ on the band. Substituting the tail density $c\,e^{-\theta(|z|-\gamma)}$ into \eqref{eq:rate-stationary} with the quadratic loss, and writing $u := |z|-\gamma$ for the exceedance,
\[
\ell_{\mathrm{diff}} \;=\; \frac{V}{8\dt}\,\E_{\pi}\big[(|z|-\gamma)_+^2\big]
\;=\; \frac{V}{8\dt}\cdot 2\int_0^\infty u^2\,c\,e^{-\theta u}\,du
\;=\; \frac{V}{8\dt}\cdot\frac{4c}{\theta^3}
\;=\; \frac{V}{8\dt}\cdot\frac{2}{\theta^2(1+\eta)},
\]
the last step using $c=\theta/(2(1+\eta))$. Since $\theta^2 = 2/(\sigma^2\dt)$, the prefactor collapses, $2/(\theta^2\dt) = \sigma^2$, leaving
$\ell_{\mathrm{diff}} = \tfrac{\sigma^2V}{8}(1+\eta)^{-1}$ with $\eta=\sqrt2 \kappa$. The multiplier $(1+\eta)^{-1}$ is precisely the tail mass of \eqref{eq:pi0}, which is the identification $F = \Prob_\pi(|z|>\gamma)$; the stated properties of $F$ are elementary.
\end{proof}

\subsection{Proof of Lemma~\ref{lem:G}}\label{app:G}

\begin{proof}
The plan: obtain the jump count over an inter-block interval, compute the loss a single jump inflicts once the fee absorbs its first $\gamma$, and then name the two ways the separated charge $\lambda VG$ departs from the exact one-jump contribution --- both deferred to Proposition~\ref{prop:error}.

\emph{Jump count.} Jumps and blocks are independent Poisson processes, so the jump count $N$ over an inter-block interval is geometric: with $\rho:=\lambda\dt/(1+\lambda\dt)$, exactly one jump has probability $(1-\rho)\rho$ and two or more probability $\rho^2$, both exactly. 

\emph{Per-jump loss.} Conditional on one jump of mark $J$ with $|J|>\gamma$, the arbitrageur executes at the next block, leaving the pool's marginal log-price at distance $\gamma$ from the external price and capturing the residual concavity gap; for the quadratic surrogate of \eqref{eq:rate-stationary} (with $\xi := |J|-\gamma$) this loss is $\tfrac{V}{8}(|J|-\gamma)^2$. Averaging over the mark gives $\tfrac{V}{8}\E[g(J)] = V\,G(\gamma;m,\delta^2)$ per one-jump interval.

\emph{The two approximations.} The separated charge $\lambda V G$ differs from the exact one-jump contribution in two ways, both bounded in Proposition~\ref{prop:error}. First, the pre-block mispricing is in fact $z = M + X + J$, with $M\in[-\gamma,\gamma]$ the state carried in and $X$ the interval's diffusive increment; write $Y := M+X$ for the pair. Dropping $Y$ is the interaction deficit. Second, one-jump intervals occur at the thinned block rate $(1-\rho)\rho/\dt = \lambda(1+\lambda\dt)^{-2}$, so the exact one-jump contribution is $\lambda(1+\lambda\dt)^{-2}\,V\,G(\gamma;m,\delta^2)$ before the interaction correction; charging instead at the full jump rate $\lambda$ --- which is what keeps the term free of $\dt$ --- is the thinning deficit.

Strict monotonicity in $\gamma$ follows since $(|j|-\gamma)^2 \1\{|j|>\gamma\}$ is strictly decreasing in $\gamma$ pointwise on the support.
\end{proof}

\subsection{Proof of Proposition~\ref{prop:error}}\label{app:error}

\begin{proof}
The plan: condition on the jump count $N$ over an inter-block interval and bound each term of the display in turn --- the multi-jump term ($N\ge2$), the interaction and the thinning (the two deficits on $N=1$, identified in the proof of Lemma~\ref{lem:G}), the block clock and the stationary law (the two deficits on $N=0$), and the curvature residual. Throughout, write $u:=\lambda\dt$ and $\rho:=u/(1+u)$, so that $N$ is geometric, $\Prob(N{=}k)=\rho^k(1-\rho)$.

\emph{Multi-jump term ($N\ge2$).} On an interval with $k\ge2$ jumps, $g(z)\le z^2$ gives a loss at most $\tfrac{V}{8}\big(\E[M^2]+\E[X^2\mid N{=}k]+k\delta^2\big)$, with $\E[M^2]\le\gamma^2$ pathwise and, since $T\mid N{=}k\sim\mathrm{Gamma}(k{+}1,\dt^{-1}{+}\lambda)$, $\E[X^2\mid N{=}k]=\sigma^2(k{+}1)\dt/(1+u)$. The geometric sums over $k\ge2$ are
\[
\textstyle\sum\rho^k(1-\rho)=\rho^2, \qquad
\textstyle\sum k\rho^k(1-\rho)=\frac{\rho^2(2-\rho)}{1-\rho}, \qquad
\textstyle\sum (k{+}1)\rho^k(1-\rho)=\frac{\rho^2(3-2\rho)}{1-\rho}.
\]
Dividing by $\dt$ and relaxing via $\rho^2/\dt\le\lambda^2\dt$, $\rho^2(2-\rho)/\big((1-\rho)\dt\big)\le2\lambda^2\dt$ and $\rho^2(3-2\rho)/\big((1-\rho)(1+u)\dt\big)\le3\lambda^2\dt$ bounds the contribution by $\tfrac{V\lambda^2\dt}{8}(\gamma^2+3\sigma^2\dt+2\delta^2)$, the last term of the display.

\emph{The single-jump split ($N=1$).} Conditional on exactly one jump, write the pre-block mispricing as $Y+J$, with $Y := M+X$ independent of the mark $J$. Convexity of $g$ with $2$-Lipschitz gradient gives $0\le g(y+j)-g(j)-g'(j)y\le y^2$ pointwise, so taking expectations,
\[
0 \;\le\; \E[g(Y+J)]-\E[g(J)] - \E[g'(J)]\,\E[Y] \;\le\; \E[Y^2\mid N{=}1].
\]
Symmetry gives $\E[Y]=0$, so the cross term drops. The single-jump contribution to the rate is $\tfrac{V}{8}\,\lambda(1+u)^{-2}\,\E[g(Y+J)]$, since $\Prob(N{=}1)/\dt=\lambda(1+u)^{-2}$, against the separated charge $\tfrac{V}{8}\,\lambda\,\E[g(J)]$; their difference splits as
\[
\tfrac{V}{8}\,\lambda(1+u)^{-2}\big(\E[g(Y{+}J)]-\E[g(J)]\big) \;-\; \tfrac{V}{8}\,\lambda\big[1-(1+u)^{-2}\big]\,\E[g(J)].
\]
The first piece is the interaction, the second the thinning deficit; each is bounded in turn.

\emph{Interaction.} It is \emph{nonnegative} by the first display, and since $(1+u)^{-2}\le1$ it is at most $\tfrac{\lambda V}{8}\E[Y^2\mid N{=}1]$. For its moments, conditioning on a jump size biases the interval, $T\mid N{=}1\sim\mathrm{Gamma}(2,\dt^{-1}+\lambda)$, whence $\E[X^2\mid N{=}1]\le2\sigma^2\dt$, while $\E[M^2]\le\gamma^2$ because the post-block state lies in the band by construction --- no property of its stationary law is used.

\emph{Thinning.} It is nonpositive, and $1-(1+u)^{-2}\le2u$ bounds it below by $-2\lambda^2\dt\,VG_\nu(\gamma)$, the third term on the lower side. The jumps that thinning removes from single-jump intervals land in multi-jump intervals, whose losses are carried, nonnegative, on the upper side.

\emph{Block clock ($N=0$).} Conditioning on a jump-free interval gives weight $(1+u)^{-1}$ and makes the inter-block increment Laplace with rate $\theta\sqrt{1+u}$, so by the computation of Lemma~\ref{lem:F} that contribution is exactly $\ell_{\mathrm{diff}}\,R$ with
\[
R \;=\; \frac{1+\eta}{(1+u)^2\big(1+\eta\sqrt{1+u}\big)}.
\]
Here $R\le1$, since $(1+u)^2\ge1$ and $1+\eta\sqrt{1+u}\ge1+\eta$. For the other side, $(1+u)^{-2}\ge1-2u$ and $\sqrt{1+u}\le1+u/2$ give $(1+\eta)/(1+\eta\sqrt{1+u})\ge1-\tfrac{\eta u}{2(1+\eta)}\ge1-u/2$, and $(1-2u)(1-u/2)=1-\tfrac52u+u^2\ge1-\tfrac52u$. The block-clock deficit therefore lies between $0$ and $\tfrac52\lambda\dt\,\ell_{\mathrm{diff}}$.

\emph{Stationary law.} The $N=0$ computation above still evaluates the carried state against the $\lambda=0$ law; jump feedback shifts that law, and this term bounds the shift. The plan, in four steps: (a) bound the one-step kernel distance by coupling; (b) convert it into a stationary-law distance via a mixing time for $K_0$; (c) bound that mixing time by condition \eqref{eq:mix}; (d) convert the law shift into a rate shift through the oscillation of the per-block loss.

\emph{(a) Coupling.} Couple the two post-block chains on the same block times and Brownian path: they agree unless a jump lands, and $\Prob(N\ge1)=1-\E[e^{-\lambda T}]=\lambda\dt/(1+\lambda\dt)$ for $T\sim\mathrm{Exp}(1/\dt)$, so $\sup_m\|K(m,\cdot)-K_0(m,\cdot)\|_{\mathrm{TV}}\le\lambda\dt$ and, telescoping, $\|K^n-K_0^n\|\le n\lambda\dt$.

\emph{(b) Mixing.} If $n_*$ is a mixing time for $K_0$, in the sense $\sup_{m,m'}\|K_0^{n_*}(m,\cdot)-K_0^{n_*}(m',\cdot)\|\le\tfrac12$, then combining the two bounds at $n=n_*$ yields $\|\pi-\pi_0\|\le2n_*\lambda\dt$. Everything therefore turns on $n_*$.

\emph{(c) Condition \eqref{eq:mix}.} The condition is the form the band's relaxation takes: an interval of width $2\gamma$ carrying a walk of step variance $2/\theta^2$ has diffusive relaxation time $4\eta^2/\pi^2$, and the additive $1$ keeps it meaningful for the integer $n_*$ at small $\eta$. Establishing \eqref{eq:mix} for the clipped Laplace kernel itself --- sticky rather than reflecting boundaries, discrete rather than diffusive at finite $\eta$ --- is not attempted here. Since $K_0$ depends on $(\gamma,\sigma,\dt)$ only through $\eta$, and not at all on $\lambda$, \eqref{eq:mix} is a condition on $\eta$ alone, and it is checked numerically over the whole empirically relevant range: every block time from $50$~ms to $12$~s, every volatility in $[0.30,1.50]$, every jump intensity, and every Uniswap fee tier from $1$ to $100$~bps --- jointly, $\eta$ from $0.15$ to $1.2\times10^{3}$. Over that range the ratio $n_*/(1+\eta^2)$ peaks at $1.27$ near $\eta=0.8$, against the $2$ that \eqref{eq:mix} asks for, and settles near $0.38$ in the fast-block, wide-band corner, so the margin is smallest where the band is narrow relative to a block's diffusion and only widens from there. Substituting $\eta^2=2\gamma^2/(\sigma^2\dt)$ into \eqref{eq:mix} gives $\|\pi-\pi_0\|\le 2C\lambda\dt+4C\lambda\gamma^2/\sigma^2$; note the $\dt$ cancels in the second piece, which is why this term does not vanish with the block time.

\emph{(d) From law shift to rate shift.} The per-block diffusion loss as a function of the carried state, $\phi(m)=\tfrac{V}{8}\E[(|m+X|-\gamma)_+^2] = \tfrac{V}{8\theta^2}\,2e^{-\eta}\cosh(\theta m)$, has $\mathrm{osc}(\phi)=\tfrac{V}{8\theta^2}(1-e^{-\eta})^2\le V\sigma^2\dt/16$, and $\mathrm{osc}(\phi)\,\|\pi-\pi_0\|/\dt \le \tfrac{\lambda V}{4}(2\gamma^2+\sigma^2\dt)$ at $C=2$.

\emph{Curvature.} Finally $\mathcal{E}_{\mathrm{curv}} = \tfrac{V}{2}\E_\pi[\rho(\xi)]$ is nonnegative by \eqref{eq:rhosign}, $\pi$ being symmetric.
\end{proof}

\subsection{Proof of Theorem~\ref{thm:floor}}\label{app:floor}

\begin{proof}
The plan: establish the chain
\[
\E_\pi[h(\xi)] \;\ge\; \tfrac{V}{8}\,\E_\pi[g(z)] \;\ge\; \tfrac{V}{8}\,\E\big[g\big(\textstyle\sum_1^N J\big)\big] \;\ge\; \tfrac{V}{8}\,\E[N]\,\E[g(J)],
\]
three inequalities in the same direction, established as (a)--(c) below; dividing by $\dt$ and using $\E[N]=\lambda\dt$ then finishes. Throughout, $N$ is the jump count over an inter-block interval, $J_1,\dots,J_N$ are the marks, and the pre-block mispricing is $z = Y + \sum_{k\le N}J_k$.

\emph{(a) Concavity residual.} At $m=0$ the law of $z$ is symmetric, so $\E_\pi[h(\xi)] \ge \tfrac{V}{8}\E_\pi[\xi^2] = \tfrac{V}{8}\E_\pi[g(z)]$ by \eqref{eq:rhosign}. It therefore suffices to bound the convex surrogate from below.

\emph{(b) Interaction.} All the next step needs is $\E[Y\mid N{=}k,\,\sum J = s]=0$. $M$ is fixed by the previous block and so is independent of this interval's jumps, with $\E[M]=0$ by symmetry; the Brownian path is independent of the Poisson process and its marks, and $X\mid T\sim\mathcal N(0,\sigma^2T)$ is mean-zero conditionally on $T$, a property preserved by averaging over any law of $T$, in particular the size-biased one induced by conditioning on $N$. Conditional Jensen applied to the convex $g$ then gives $\E[g(Y+s)\mid N{=}k,\,\sum J = s] \ge g(s)$, hence $\E[g(z)\mid N{=}k] \ge \E[g(\sum_{k}J)]$. Dropping the interaction can only lower the loss.

\emph{(c) Aggregation.} This is the hypothesis, and Merton satisfies it: $\sum_{1}^{k}J \sim \mathcal N(0,k\delta^2)$, so by \eqref{eq:Gclosed} and the monotonicity of $\Psi$,
\[
\E\big[g(\textstyle\sum_1^k J)\big] \;=\; k\delta^2\,\Psi\!\big(\tfrac{\gamma}{\delta\sqrt k}\big) \;\ge\; k\delta^2\,\Psi\!\big(\tfrac{\gamma}{\delta}\big) \;=\; k\,\E[g(J)],
\]
since $\gamma/(\delta\sqrt k)\le\gamma/\delta$. Several jumps in one interval are cleared by a single trade, and that trade costs at least as much as clearing them one at a time.

Combining, and using $\E[N]=\lambda\dt$ exactly for a Poisson jump process observed over an independent $\mathrm{Exp}(1/\dt)$ interval,
\[
\ell(\dt) \;=\; \tfrac{1}{\dt}\E_\pi[h(\xi)] \;\ge\; \tfrac{V}{8\dt}\sum_{k\ge0}\Prob(N{=}k)\,k\,\E[g(J)] \;=\; \tfrac{V}{8}\,\tfrac{\E[N]}{\dt}\,\E[g(J)] \;=\; \lambda V G .
\]
Both (a) and (b) use symmetry, and neither survives $m\neq0$ unchanged: $\rho$ is negative for downward moves, so the surrogate step is signed only against a symmetric law, and (b) picks up $\E[g'(\sum J)]\E[Y]$. Proposition~\ref{prop:error} carries both deficits for the rate itself.
\end{proof}

\subsection{Proof of Corollary~\ref{cor:floor}}\label{app:corfloor}

\begin{proof}
$F$ is strictly decreasing by Lemma~\ref{lem:F} and $\kappa = \gamma/(\sigma\sqrt{\dt})$ is strictly decreasing in $\dt$, so $F(\kappa(\dt))$ is strictly increasing in $\dt$; the limits follow from $F(\infty)=0$ and $F(0)=1$, and the rate from $F(\kappa)\sim\sigma\sqrt{\dt}/(\sqrt2\gamma)$. The jump term is independent of $\dt$. The transfer to $\ell$ follows from $\ell(\dt_2)-\ell(\dt_1) \ge \ell_0(\dt_2)-\ell_0(\dt_1) - w$, where $w = 0.849+0.653 = 1.50$~bp/yr is the width of the remainder envelope of Proposition~\ref{prop:error} over $\dt\le12$~s.
\end{proof}

\subsection{Proof of Theorem~\ref{thm:planner}}\label{app:planner}

\begin{proof}
The plan: reduce the first-order condition to a function of $\dt$ that is strictly monotone, which gives existence and uniqueness of the minimum in one stroke, then solve the resulting cubic in $\eta$ by Cardano.

$\ell_0$ is differentiable and strictly increasing; $c/\dt$ is strictly decreasing. Computing $W'(\dt) = \ell_0'(\dt) - c/\dt^2$ and equating to zero gives the first-order condition $\ell_0'(\dt^{\mathrm{opt}}) = c/(\dt^{\mathrm{opt}})^2$. Substituting \eqref{eq:focexplicit} and $\sigma^2\dt = \gamma^2/\kappa^2$,
\[
\dt^2\,W'(\dt) \;=\; q(\dt) - c, \qquad q(\dt) := \dt^2\ell_0'(\dt) = \frac{V\gamma^2}{8\,\eta\,(1+\eta)^2},\quad \eta = \sqrt2\,\kappa .
\]
Now $q$ is strictly decreasing in $\eta$ and hence strictly increasing in $\dt$, with $q\to0$ as $\dt\to0$ ($\eta\to\infty$) and $q\to\infty$ as $\dt\to\infty$ ($\eta\to0$). So $W'$ changes sign exactly once, from negative to positive, and $W$ has a unique interior global minimum, at which $q = c$; that is \eqref{eq:planner}. For the closed form of $\eta^{\mathrm{opt}}$, substituting $\eta = w-2/3$ into $\eta^3+2\eta^2+\eta-A=0$ gives the depressed cubic $w^3-w/3-(2/27+A)=0$, whose polynomial discriminant is negative for every $A>0$, leaving one real root $w = u^{1/3}+u'^{1/3}$ with $uu'=1/729$; evaluating the smaller cube root as $\tfrac19u^{-1/3}$ avoids cancellation.
\end{proof}

\section{Illustrations}\label{app:figs}

Both figures are drawn at stylised parameters chosen for legibility; the
calibrated separation of Section~\ref{sec:numerics} is more extreme
($\delta/\gamma \approx 38$, $\lambda\dt \approx 10^{-4}$ at $12$~s).

\begin{figure}[htbp]
\centering
\includegraphics[width=\textwidth]{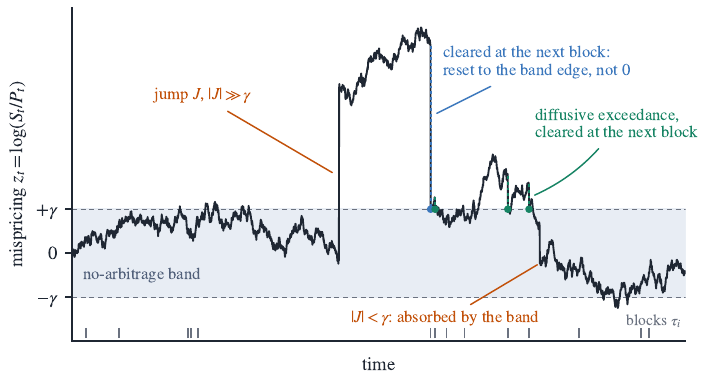}
\caption{A sample path of the mispricing $z_t=\log(S_t/P_t)$ under
\eqref{eq:merton}. Between Poisson block arrivals (ticks) $z$ diffuses freely;
at a block the arbitrageur trades iff $|z|>\gamma$, resetting $z$ to the band
edge $\pm\gamma$, not to $0$. Diffusive exceedances of the band (teal) are on
the scale $\sigma\sqrt{\dt}$ and are cleared at the next block. A jump with
$|J|\gg\gamma$ (orange) opens a $\dt$-independent gap, carried until the next
block and cleared in full (blue), while a jump with $|J|<\gamma$ is absorbed by
the band without a trade.}
\label{fig:zpath}
\end{figure}

\begin{figure}[htbp]
\centering
\includegraphics[width=\textwidth]{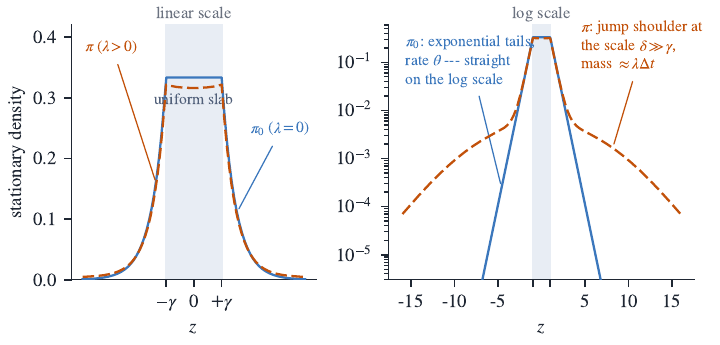}
\caption{The stationary law of the pre-block mispricing on a linear (left) and
logarithmic (right) scale, at $\eta=2$, $\delta=5\gamma$, $\lambda\dt=0.08$.
The law $\pi_0$ ($\lambda=0$, the closed form \eqref{eq:pi0}) is uniform on the
band with exponential tails of rate $\theta$. With Merton jumps, $\pi$
(quasi-exact: jump feedback on the carried state is ignored, an $O(\lambda\dt)$
effect) adds a far field at the jump scale $\delta\gg\gamma$ --- the shoulder
leaving the exponential tail on the log scale, mass that no block schedule can
pull in. The slab of $\pi$ is also slightly depressed towards its centre:
exact uniformity on the band is a $\lambda=0$ property, and jumps break it at
$O(\lambda\dt)$, exaggerated here by the stylised $\lambda\dt$.}
\label{fig:statlaw}
\end{figure}
\fi

\end{document}